\newtheorem{thm}{Theorem}[section]
\newtheorem{lma}[thm]{Lemma}
\newtheorem{cor}[thm]{Corollary}
\newtheorem{prop}[thm]{Proposition}
\newtheorem*{adef}{Definition}
\newtheorem{claim}{Claim}
\DeclareMathOperator{\tw}{tw}
\DeclareMathOperator{\conn}{conn}
\DeclareMathOperator{\adj}{adj}
\begin{document}

\newcommand{\leqfptT}{ $\leq^{\textrm{fpt}}_{\textrm{T}}$ }
\newcommand{\leqfptP}{ $\leq^{\textrm{fpt}}_{\textrm{pars}}$ }
\newcommand{\paramcount}[1]{$p$-\#\textsc{#1}}
\newcommand{\genprob}{Induced Subgraph With Property}

\title{The Parameterised Complexity of Counting Connected Subgraphs and Graph Motifs \thanks{Research supported by EPSRC grant ``Computational Counting''}}
\date{}
\author{Mark Jerrum\thanks{\texttt{m.jerrum@qmul.ac.uk}} and Kitty Meeks\thanks{\texttt{k.meeks@glasgow.ac.uk}} \thanks{Present address: School of Mathematics and Statistics, University of Glasgow}\\
\small{School of Mathematical Sciences, Queen Mary University of London}}
\maketitle

\begin{abstract}
We introduce a family of parameterised counting problems on graphs, \paramcount{\genprob}($\Phi$), which generalises a number of problems which have previously been studied.  This paper focusses on the case in which $\Phi$ defines a family of graphs whose edge-minimal elements all have bounded treewidth; this includes the special case in which $\Phi$ describes the property of being connected.  We show that exactly counting the number of connected induced $k$-vertex subgraphs in an $n$-vertex graph is \#W[1]-hard, but on the other hand there exists an FPTRAS for the problem; more generally, we show that there exists an FPTRAS for \paramcount{\genprob}($\Phi$) whenever $\Phi$ is monotone and all the minimal graphs satisfying $\Phi$ have bounded treewidth.  We then apply these results to a counting version of the \textsc{Graph Motif} problem.
\end{abstract}

\section{Introduction}

Parameterised counting problems were introduced by Flum and Grohe in \cite{flum04} and also independently by McCartin \cite{mccartin06}.  In this paper we focus on problems of the following form:
\\

\leftskip=1cm
\noindent
\textit{Input:} An $n$-vertex graph $G = (V,E)$, and $k \in \mathbb{N}$. \\
\textit{Parameter:} $k$. \\
\textit{Question:} How many (labelled) $k$-vertex subsets of $V$ induce graphs with a given property?
\\

\leftskip=0cm
It should be noted that, while the statement of this problem is concerned with \emph{induced} subgraphs, it also encompasses problems more often formulated in terms of counting subgraphs that are not necessarily induced.  For example, to count the number of $k$-vertex paths in $G$ (not necessarily induced), we would consider the labelled subgraph induced by $v_1,\ldots,v_k$ to have the desired property if and only if $v_iv_{i+1}$ is an edge for $1 \leq i \leq k-1$, regardless of what other edges may be present (and then divide the result by two, as this will count each path exactly twice).

Many problems of this form are known to be \#W[1]-hard (see Section \ref{complexity} for definitions of concepts from parameterised complexity), and thus are unlikely to be solvable exactly in time $f(k)n^{O(1)}$ for any function $f$.  A number of these \#W[1]-hard problems are in fact \emph{induced} subgraph counting problems: Chen and Flum \cite{chen07} demonstrated that problems of counting $k$-vertex induced paths and of counting $k$-vertex induced cycles are both \#W[1]-complete, and more generally Chen, Thurley and Weyer \cite{chen08} showed that it is \#W[1]-complete to count the number of induced subgraphs isomorphic to a given graph from the class $\mathcal{C}$ (\paramcount{Induced Subgraph Isomorphism}$(\mathcal{C})$) whenever $\mathcal{C}$ contains arbitrarily large graphs.  Most other subgraph counting problems previously studied in the literature can be described in the following way, for appropriate choices of a class of graphs $\mathcal{H}$:
\\

\hangindent=1cm
\paramcount{Sub}$(\mathcal{H})$ \\
\textit{Input:} A graph $G$ and an element $H \in \mathcal{H}$. \\
\textit{Parameter:} $k = |V(H)|$. \\
\textit{Question:} How many subgraphs (not necessarily induced) of $G$ are isomorphic to $H$? \\

Examples of \#W[1]-hard problems of this form include counting the number of $k$-vertex cliques (\paramcount{Clique} \cite{flum04}), paths (\paramcount{Path} \cite{flum04}), cycles (\paramcount{Cycle} \cite{flum04}) and matchings (\paramcount{Matching} \cite{radu13}).  Very recently, Curticapean and Marx \cite{radu14} proved a dichotomy result for \paramcount{Sub}$(\mathcal{H})$, demonstrating that the problem is \#W[1]-complete unless all the graphs in $\mathcal{H}$ have vertex-cover number bounded by some fixed constant, in which case the problem is fixed parameter tractable.

A natural question, therefore, is whether such counting problems, which are hard to solve exactly, can be efficiently approximated.  It is shown in \cite{arvind02} that there exists an efficient approximation scheme for \paramcount{Sub}($\mathcal{H}$) whenever $\mathcal{H}$ is a class of graphs having bounded treewidth,

In Section \ref{model} below, we introduce formally a family of parameterised counting problems which includes all the specific problems discussed above.  This family also includes the problem of counting the number of $k$-vertex connected induced subgraphs, a problem which we show to be \#W[1]-hard in Section \ref{exact}.  In Section \ref{approximate} we generalise the approximation result from \cite{arvind02}, showing that there exists an FPTRAS for the more general problem of counting the number of (labelled) $k$-vertex subsets of a graph satisfying a monotone property $\Phi$, provided that the edge-minimal graphs satisfying $\Phi$ all have bounded treewidth.  Examples of problems in this class for which there exists an FPTRAS include those of counting the number of $k$-vertex induced subgraphs that are connected, the number of $k$-vertex induced subgraphs that are Hamiltonian, and the number of $k$-vertex induced subgraphs that are not bipartite.  This last example contrasts with the result of Khot and Raman \cite{khot02} that deciding whether a graph contains an induced $k$-vertex subgraph that is bipartite is W[1]-hard.

Finally, in Section \ref{motif}, we apply some of these results to a counting version of the problem \textsc{Graph Motif}, introduced by Lacroix, Fernandes and Sagot \cite{lacroix06} in the context of metabolic networks.  The problem takes as input an $n$-vertex coloured graph, together with a \emph{motif} or multiset of colours $M$, and a solution is a subset $U$ of $|M|$ vertices such that the subgraph induced by $U$ is connected and the colour-(multi)set of $U$ is exactly $M$.  A counting version of this problem was studied by Guillemot and Sikora \cite{guillemot13}; we define and analyse a different natural counting version of \textsc{Graph Motif}, which is a more direct translation of the standard decision version into the counting world.

In the remainder of this section, we first introduce some notation in Section \ref{notation}, then introduce some key concepts in the study of parameterised counting complexity in Section \ref{complexity}, before giving formal definitions of the problems we consider in Section \ref{problems}.

\subsection{Notation}
\label{notation}

Given a graph $G = (V,E)$, and a subset $U \subset V$, we write $G[U]$ for the subgraph of $G$ induced by the vertices of $U$.  We denote by $\overline{G}$ the \emph{complement} of $G$, that is, $\overline{G} = (V,E')$ where $E' = V^{(2)} \setminus E$.  If $v \in V$, then $\Gamma(v)$ denotes the set of neighbours of $v$ in $G$.  For any $k \in \mathbb{N}$, we write $[k]$ as shorthand for $\{1,\ldots,k\}$, and denote by $S_k$ the set of all permutations on $[k]$, that is, injective functions from $[k]$ to $[k]$.  We write $V^{(k)}$ for the set of all subsets of $V$ of size exactly $k$, and $V^{\underline{k}}$ for the set of $k$-tuples $(v_1,\ldots,v_k) \in V^k$ such that $v_1,\ldots,v_k$ are all distinct. 

If $G$ is coloured by some colouring $\omega: V \rightarrow [k]$, we say that a subset $U \subset V$ is \emph{colourful} (under $\omega$) if, for every $i \in [k]$, there exists a unique vertex $u \in U$ such that $\omega(u) = i$; note that this can only be achieved if $U \in V^{(k)}$.  We write $\omega|_U$ for the restriction of $\omega$ to the set $U$; if $U$ is colourful under $\omega$ then $\omega|_U$ is a bijection.

Given graphs $G$ and $H$, a \emph{embedding} of $H$ in $G$ is an injective mapping $\theta: V(H) \rightarrow V(G)$ such that, for all $uv \in E(H)$, we have $\theta(u)\theta(v) \in E(G)$.

We will be considering labelled graphs, where a labelled graph is a pair $(H, \pi)$ such that $H$ is a graph and $\pi : [|V(H)|] \rightarrow V(H)$ is a bijection.  We write $\mathcal{L}(k)$ for the set of all labelled graphs on the vertex set $[k]$.  Given a graph $G = (V,E)$ and a $k$-tuple of vertices $(v_1,\ldots,v_k) \in V^{\underline{k}}$, $G[v_1,\ldots,v_k]$ denotes the labelled graph $(H,\pi)$ where $H = G[\{v_1,\ldots,v_k\}]$ and $\pi(i) = v_i$ for each $i \in [k]$.  We write $(H,\pi) \subseteq (H',\pi')$ if, for all $e=uv \in E(H)$, $\pi'(\pi^{-1}(u))\pi'(\pi^{-1}(v)) \in E(H')$.  Given a collection $\mathcal{H} \subseteq \mathcal{L}(k)$ of labelled graphs, we say that a graph $(H,\pi) \in \mathcal{H}$ is an \emph{edge-minimal} element of $\mathcal{H}$ if there is no $(H',\pi') \in \mathcal{H}$ such that $(H',\pi') \subseteq (H,\pi)$ and $H'$ has strictly fewer edges than $H$.  

We say that $(T,\mathcal{D})$ is a \emph{tree decomposition} of $G$ if $T$ is a tree and $\mathcal{D} = \{\mathcal{D}(t): t \in V(T)\}$ is a collection of non-empty subsets of $V(G)$ (or \emph{bags}), indexed by the nodes of $T$, satisfying:
\begin{enumerate}
\item $V(G) = \bigcup_{t \in V(T)} \mathcal{D}(t)$,
\item for every $e=uv \in E(G)$, there exists $t \in V(T)$ such that $u,v \in \mathcal{D}(t)$,
\item for every $v \in V(G)$, if $T(v)$ is defined to be the subgraph of $T$ induced by nodes $t$ with $v \in \mathcal{D}(t)$, then $T(v)$ is connected.
\end{enumerate}
The \emph{width} of the tree decomposition $(T,\mathcal{D})$ is defined to be $\max_{t \in V(T)} |\mathcal{D}(t)| - 1$, and the \emph{treewidth} of $G$, written $\tw (G)$, is the minimum width over all tree decompositions of $G$.

\subsection{Parameterised counting complexity}
\label{complexity}

In this section, we introduce key notions from parameterised counting complexity, which we will use in the rest of the paper.  A parameterised counting problem is a pair $(\Pi,\kappa)$ where, for some finite alphabet $\Sigma$, $\Pi: \Sigma^* \rightarrow \mathbb{N}_0$ is a function  and $\kappa: \Sigma^* \rightarrow \mathbb{N}$ is a parameterisation (a polynomial-time computable mapping).  An algorithm $A$ for a parameterised counting problem $(\Pi,\kappa)$ is said to be an \emph{fpt-algorithm} if there exists a computable function $f$ and a constant $c$ such that the running time of $A$ on input $I$ is bounded by $f(\kappa(I))|I|^c$.  Problems admitting an fpt-algorithm are said to belong to the class FPT.

To understand the complexity of parameterised counting problems, Flum and Grohe \cite{flum04} introduce two kinds of reductions between such problems.

\begin{adef}
Let $(\Pi,\kappa)$ and $(\Pi',\kappa')$ be parameterised counting problems.
\begin{enumerate}
\item An fpt parsimonious reduction from $(\Pi,\kappa)$ to $(\Pi',\kappa')$ is an algorithm that computes, for every instance $I$ of $\Pi$, an instance $I'$ of $\Pi'$ in time $f(\kappa(I))\cdot |I|^c$ such that $\kappa'(I') \leq g(\kappa(I))$ and 
$$\Pi(I) = \Pi'(I')$$ 
(for computable functions $f,g: \mathbb{N} \rightarrow \mathbb{N}$ and a constant $c \in \mathbb{N}$).  In this case we write $(\Pi,\kappa)$ \emph{\leqfptP} $(\Pi',\kappa')$.

\item An fpt Turing reduction from $(\Pi,\kappa)$ to $(\Pi',\kappa')$ is an algorithm $A$ with an oracle to $\Pi'$ such that
\begin{enumerate}
\item $A$ computes $\Pi$,
\item $A$ is an fpt-algorithm with respect to $\kappa$, and
\item there is a computable function $g:\mathbb{N} \rightarrow \mathbb{N}$ such that for all oracle queries ``$\Pi'(I') = ?$'' posed by $A$ on input $x$ we have $\kappa'(I') \leq g(\kappa(I))$.
\end{enumerate}
In this case we write $(\Pi,\kappa)$ \emph{\leqfptT} $(\Pi',\kappa')$.
\end{enumerate}

\end{adef}

Using these notions, Flum and Grohe introduce a hierarchy of parameterised counting complexity classes, \#W[$t$], for $t \geq 1$; this is the analogue of the W-hierarchy for parameterised decision problems.  In order to define this hierarchy, we need some more notions related to satisfiability problems.  

The definition of levels of the hierarchy uses the following problem, where $\psi$ is a first-order formula with a free relation variable of arity $s$.
\\

\hangindent=1cm
\paramcount{WD}$_{\psi}$ \\
\textit{Input:} A relational structure\footnote{The relational structure $\mathcal{A}$ of vocabulary $\tau$ of relations consists of the \emph{universe} $A$ together with an interpretation $R^{\mathcal{A}}$ of every relation $R$ in $\tau$.} $\mathcal{A}$ and $k \in \mathbb{N}$. \\
\textit{Parameter:} $k$. \\
\textit{Question:} How many relations $S \subseteq A^s$ of cardinality $|S|=k$ are such that $\mathcal{A} \models \psi(S)$ (where $A$ is the universe of $\mathcal{A}$)? \\

If $\Psi$ is a class of first-order formulas, then \paramcount{WD}-$\Psi$ is the class of all problems \paramcount{WD}$_{\psi}$ where $\psi \in \Psi$.  The classes of first-order formulas $\Sigma_t$ and $\Pi_t$, for $t \geq 0$, are defined inductively.  Both $\Sigma_0$ and $\Pi_0$ denote the class of quantifier-free formulas, while, for $t \geq 1$, $\Sigma_t$ is the class of formulas
$$\exists x_1 \ldots \exists x_k \psi,$$
where $\psi \in \Pi_{t-1}$, and $\Pi_t$ is the class of formulas
$$\forall x_1 \ldots \forall x_k \psi,$$
where $\psi \in \Sigma_{t-1}$.  We are now ready to define the classes \#W[$t$], for $t \geq 1$.

\begin{adef}[\cite{flum04,flumgrohe}]
For $t \geq 1$, $\#W[t]$ is the class of all parameterised counting problems that are fpt parsimonious reducible to \paramcount{WD}-$\Pi_t$.
\end{adef} 

Unless FPT=W[1], there does not exist an algorithm running in time $f(k)n^{O(1)}$ for any problem that is hard for the class \#W[1] under either fpt parsimonious reductions or fpt Turing reductions.  In the setting of this paper, a parameterised counting problem will be considered to be intractable if it is \#W[1]-hard with respect to either form of reduction.

In \cite{flumgrohe}, Flum and Grohe also define a counting version of the A-hierarchy for parameterised problems; this turns out to be easier to use for some of our purposes.  The definition is in terms of the following model-checking problem, where $C$ is a class of structures and $\Psi$ a class of formulas.
\\

\hangindent=1cm
\paramcount{MC}$(C,\Psi)$ \\
\textit{Input:} A structure $\mathcal{A} \in C$ and a formula $\psi \in \Psi$. \\
\textit{Parameter:} $|\psi|$. \\
\textit{Question:} What is $|\psi(\mathcal{A})|$? \\

Here, $\psi(\mathcal{A})$ is the set of tuples $(a_1,\ldots,a_k) \in A^k$ such that $\psi(a_1,\ldots,a_k)$ is true in $\mathcal{A}$, where $k$ is the number of free variables in $\psi$, and $A$ the universe of $\mathcal{A}$.  If $C$ is the class of all structures, we write simply \paramcount{MC}($\Psi$).  The counting analogue of the A-hierarchy is then defined as follows.

\begin{adef}[\cite{flumgrohe}]
For all $t \geq 1$, $\#A[t]$ is the class of all parameterised counting problems reducible to \paramcount{MC}$(\Pi_{t-1})$ by an fpt parsimonious reduction.
\end{adef}

It is known that the first levels of these two hierarchies for parameterised counting problems coincide:
\begin{thm}[\cite{flumgrohe}]
\#W[1] = \#A[1].
\label{W=A}
\end{thm}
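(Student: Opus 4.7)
The plan is to establish both inclusions separately, via an intermediate concrete counting problem. The easier direction \#A[1] $\subseteq$ \#W[1] I would handle by a direct encoding: given a quantifier-free formula $\psi(x_1,\ldots,x_k)$ over a structure $\mathcal{A}$, enrich $\mathcal{A}$ with the index set $\{1,\ldots,k\}$ and a membership predicate, then construct a $\Pi_1$ formula $\widetilde\psi(S)$ with a free binary relation variable $S$ asserting that $S \subseteq \{1,\ldots,k\} \times A$ is functional and that the induced tuple satisfies $\psi$. Each conjunct is universal over a quantifier-free body and hence lies in $\Pi_1$; combined with the cardinality constraint $|S|=k$, functionality forces $S$ to be the graph of a total function $\{1,\ldots,k\}\to A$. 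Hence ordered $k$-tuples satisfying $\psi$ correspond bijectively to size-$k$ relations satisfying $\widetilde\psi$, giving a parsimonious fpt-reduction.

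For the substantive direction \#W[1] $\subseteq$ \#A[1], the strategy is to identify a single counting problem that is complete for \#W[1] under parsimonious reductions and manifestly lies in \#A[1]. The natural candidate is $p$-\#\textsc{Multi-Colored-Clique} --- counting colourful $k$-cliques in a properly $k$-coloured graph --- which is trivially in \#A[1], since colourfulness together with pairwise adjacency is a single quantifier-free formula on $k$ free variables. Once this problem is shown to be \#W[1]-hard under \leqfptP, membership in \#A[1] transfers to every \#W[1] problem and closes the loop.

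The core work is therefore the parsimonious fpt-reduction from the generic problem \paramcount{WD}-$\Pi_1$ to $p$-\#\textsc{Multi-Colored-Clique}. I would adapt the colour-coded gadget construction from the decision-level proof that W[1] reduces to \textsc{Clique}. Given an instance $(\mathcal{A}, k)$ of \paramcount{WD}$_\psi$ with $\psi(S) = \forall y_1\cdots\forall y_m\,\varphi(\bar y, S)$, I would fix a canonical order on the putative atoms of $S$ and make the $i$-th colour class consist of candidates for the $i$-th atom, so that each size-$k$ relation decomposes uniquely as an ordered tuple of atoms, one per colour class. Edges between distinct colour classes are included precisely when the corresponding pair of candidate atoms is jointly consistent with $\varphi$; because $m$ is a constant of the fixed formula $\psi$, this reduces to a local check on bounded subsets of $S$. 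The main obstacle is precisely the parsimony requirement: the decision-version reductions typically produce many cliques per solution (for instance by permuting atoms among the copies), whereas here each satisfying $S$ must have \emph{exactly one} colourful-clique preimage. Enforcing this demands care in the gadget design --- fixing the canonical atom ordering through the colours and ensuring that the universal constraints translate into pairwise rather than $m$-wise edge conditions, possibly at the cost of enlarging the vertex set by a factor depending only on $m$ and $k$.
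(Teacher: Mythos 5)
First, a point of orientation: the paper does not actually prove this theorem --- it is quoted verbatim from Flum and Grohe's monograph --- so your proposal can only be measured against the proof given there. Your overall architecture does match it: the standard argument pivots on showing that a concrete clique-counting problem is \#W[1]-hard under fpt parsimonious reductions while lying trivially in \#A[1], and your treatment of the inclusion \#A[1] $\subseteq$ \#W[1] (encoding an ordered $k$-tuple as a functional relation $S \subseteq [k] \times A$ of cardinality $k$, with functionality and satisfaction of $\psi$ expressed by a $\Pi_1$ formula) is essentially the textbook argument. One technical point even here: \paramcount{WD}$_{\psi}$ is defined for a \emph{fixed} formula $\psi$, whereas in \paramcount{MC}$(\Pi_0)$ the formula is part of the input, so your target formula $\widetilde\psi$ cannot simply vary with the input formula; one must encode the input formula into the relational structure and use a single universal $\Pi_1$ formula over an enlarged vocabulary.

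The genuine gap is in the direction you correctly identify as substantive. The gadget you sketch --- one colour class per ``atom'' of $S$, with an edge between two candidate atoms ``precisely when the pair is jointly consistent with $\varphi$'' --- would fail, because a $\Pi_1$ condition $\forall \bar y\,\varphi(\bar y,S)$ is not in general determined by the pairs of atoms of $S$. For example, $\forall y_1\forall y_2\forall y_3\,\bigl(Sy_1 \wedge Sy_2 \wedge Sy_3 \rightarrow Ey_1y_2 \vee Ey_2y_3 \vee Ey_1y_3\bigr)$ constrains \emph{triples} of atoms: every pair may be jointly consistent while some triple is not, so the constructed graph would contain colourful $k$-cliques with no preimage $S$; negative occurrences of $S$ in $\varphi$ make the locality failure worse still. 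The standard repair is to let each clique vertex carry more than one atom --- a block of atoms, or a pair consisting of an assignment to $\bar y$ together with the relevant fragment of $S$ --- sized so that every constraint of $\varphi$ is witnessed inside the union of two vertices, and then parsimony requires that each satisfying $S$ decompose into such blocks in \emph{exactly one} way, which is where the canonical ordering must actually do work. This block construction and the accompanying bijection between satisfying relations and colourful cliques are the real content of the theorem, and they are precisely what your proposal defers to ``care in the gadget design.'' As written, the proof is incomplete at its central step.
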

Thus, to prove that a problem belongs to \#W[1] (=\#A[1]) it suffices to show that it is reducible, under fpt parsimonious reductions, to \paramcount{MC}($\Pi_0$).

When considering approximation algorithms for parameterised counting problems, an ``efficient'' approximation scheme is an FPTRAS, as introduced by Arvind and Raman \cite{arvind02}; this is the analogue of a FPRAS (fully polynomial randomised approximation scheme) in the parameterised setting.
\begin{adef}
An FPTRAS for a parameterised counting problem $\Pi$ with parameter $k$ is a randomised approximation scheme that takes an instance $I$ of $\Pi$ (with $|I| = n$), and real numbers $\epsilon > 0$ and $0 < \delta < 1$, and in time $f(k) \cdot g(n,1/\epsilon,\log(1/\delta))$ (where $f$ is any computable function, and $g$ is a polynomial in $n$, $1/\epsilon$ and $\log(1 / \delta)$) outputs a rational number $z$ such that
$$\mathbb{P}[(1-\epsilon)\Pi(I) \leq z \leq (1 + \epsilon)\Pi(I)] \geq 1 - \delta.$$
\end{adef}

\subsection{Problems considered}
\label{problems}

In this section we begin by introducing a general family of parameterised counting problems on graphs, in which the goal is to count $k$-tuples of vertices that induce subgraphs with particular properties.  We then give formal definitions of the problems we will consider in Sections \ref{exact} and \ref{approximate}.  Subject to appropriate rescaling, our model can be regarded as a generalisation of many problems that involve counting labelled subgraphs, including \paramcount{Cycle} \cite{flum04}, \paramcount{Path} \cite{flum04}, \paramcount{StrEmb}($\mathcal{C}$) \cite{chen08} and \#$k$-\textsc{Matching} \cite{radu13}.  Induced subgraph problems that are invariant under relabelling of vertices have also been studied in the literature on parameterised counting, including \paramcount{Clique} \cite{flum04}, and can be regarded as instances of a sub-family of problems in our model.  A more detailed discussion of how problems previously studied in the literature, including \paramcount{Sub}$(\mathcal{H})$, can be expressed in the language of this model is given in \cite{bddlayers}.

\subsubsection{The model}
\label{model}

Let $\Phi$ be a family $(\phi_1,\phi_2,\ldots)$ of functions $\phi_k: \mathcal{L}(k) \rightarrow \{0,1\}$, such that the function mapping $k \mapsto \phi_k$ is computable.  For any $k$, we write $\mathcal{H}_{\phi_k}$ for the set $\{(H,\pi) \in \mathcal{L}(k): \phi_k(H,\pi) = 1\}$, and set $\mathcal{H}_{\Phi} = \bigcup_{k \in \mathbb{N}} \mathcal{H}_{\phi_k}$.

We then define the following problem.
\\

\hangindent=1cm
\paramcount{\genprob}($\Phi$) \\
\textit{Input:} A graph $G = (V,E)$ and $k \in \mathbb{N}$.\\
\textit{Parameter:} $k$. \\
\textit{Question:} What is the cardinality of the set $\{(v_1,\ldots,v_k) \in V^{\underline{k}}:$ \\
$\phi_k(G[v_1,\ldots,v_k]) = 1 \}$? \\

Observe that we can equivalently regard this problem as that of counting induced labelled $k$-vertex subgraphs that belong to $\mathcal{H}_{\Phi}$.  Note that this generalises the problem \paramcount{Sub}$(\mathcal{H})$, as the latter problem only permits counting copies (not necessarily induced) of one particular graph $H$, whereas \paramcount{\genprob}$(\Phi)$ allows us to count all $k$-vertex subgraphs having some more complicated property.

We say that $\Phi$ is a \emph{monotone} property if, for every $k$, whenever $\phi_k(H,\pi) = 1$ for some $(H,\pi) \in \mathcal{L}(k)$, and $(H',\pi') \in \mathcal{L}(k)$ with $(H,\pi) \subseteq (H',\pi')$, then we also have $\phi_k(H',\pi') = 1$.  We further describe $\Phi$ as a \emph{symmetric} property if the value of $\phi_k(H,\pi)$ depends only on the graph $H$ and not on the labelling of the vertices; this corresponds to ``unlabelled'' graph problems, such as \paramcount{clique}.  We can define a related problem for symmetric properties:
\\

\hangindent=1cm
\paramcount{Induced Unlabelled Subgraph With Property}($\Phi$) \\
\textit{Input:} A graph $G = (V,E)$ and $k \in \mathbb{N}$.\\
\textit{Parameter:} $k$. \\
\textit{Question:} What is the cardinality of the set $\{\{v_1,\ldots,v_k\} \in V^{(k)}: \phi_k(G[v_1,\ldots,v_k]) = 1 \}$? \\

For any symmetric property $\Phi$, the output of \paramcount{\genprob}$(\Phi)$ is exactly $k!$ times the output of \paramcount{Induced Unlabelled Subgraph With Property}$(\Phi)$.  The unlabelled version is less general than the labelled version, as this only allows us count induced subgraphs having some particular property rather than, for example, all (not necessarily induced) copies of some fixed graph $H$.  As an example, the labelled version can express problems such as \paramcount{Matching}, whereas the former would only allow us to count $k$-vertex induced subgraphs that contain a perfect matching (ignoring the fact that any one $k$-vertex induced subgraph may contain many perfect matchings).  

Note that, for any $\phi_k \in \Phi$, the problem of determining the cardinality of the set $\{(v_1,\ldots,v_k) \in V^{\underline{k}}: \phi_k(G[v_1,\ldots,v_k]) = 1 \}$ can easily be expressed as an instance of \paramcount{MC}($\Pi_0$); thus, by Theorem \ref{W=A}, we obtain the following result:
\begin{prop}
For any $\Phi$, the problem \paramcount{\genprob}($\Phi$) belongs to \#W[1].  If $\Phi$ is symmetric, then the same is true for \paramcount{Induced Unlabelled Subgraph With Property}$(\Phi)$.
\label{in-W}
\end{prop}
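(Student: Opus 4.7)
The plan is to carry out, for each $k$, the translation already sketched in the sentence preceding the proposition: construct an explicit quantifier-free formula $\psi_k$ whose satisfying tuples on a suitable structure are precisely the objects to be counted, and then invoke Theorem \ref{W=A}.

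For \paramcount{\genprob}$(\Phi)$, I would regard the input graph $G$ as a relational structure with binary edge relation $E$. For each labelled graph $(H,\pi) \in \mathcal{L}(k)$, the conjunction
\[
\delta_{(H,\pi)}(x_1,\ldots,x_k) \;=\; \bigwedge_{\substack{1 \leq i < j \leq k \\ \pi(i)\pi(j) \in E(H)}} E(x_i,x_j) \;\wedge\; \bigwedge_{\substack{1 \leq i < j \leq k \\ \pi(i)\pi(j) \notin E(H)}} \neg E(x_i,x_j)
\]
is satisfied by $(v_1,\ldots,v_k)$ in $G$ iff $G[v_1,\ldots,v_k]$ coincides, as a labelled graph, with $(H,\pi)$. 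Setting
\[
\psi_k(x_1,\ldots,x_k) \;=\; \bigwedge_{1 \leq i < j \leq k} (x_i \neq x_j) \;\wedge\; \bigvee_{(H,\pi) \in \mathcal{H}_{\phi_k}} \delta_{(H,\pi)}(x_1,\ldots,x_k),
\]
the tuples satisfying $\psi_k$ on $G$ are exactly those in $V^{\underline{k}}$ whose induced labelled subgraph lies in $\mathcal{H}_{\phi_k}$, so $|\psi_k(G)|$ equals the answer of \paramcount{\genprob}$(\Phi)$ on $(G,k)$. Computability of $k \mapsto \phi_k$ lets us enumerate $\mathcal{H}_{\phi_k} \subseteq \mathcal{L}(k)$ and build $\psi_k$ in time depending only on $k$, so the map $(G,k) \mapsto (G,\psi_k)$ is an fpt parsimonious reduction to \paramcount{MC}$(\Pi_0)$, and the claim follows from Theorem \ref{W=A}.

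For the symmetric version, I would augment $G$ with an arbitrary linear order $<$ on $V$ (read off, say, from the input encoding) and replace $\psi_k$ by $\psi_k \wedge (x_1 < x_2 < \cdots < x_k)$. Since $\Phi$ is symmetric, membership of $G[v_1,\ldots,v_k]$ in $\mathcal{H}_{\phi_k}$ depends only on $\{v_1,\ldots,v_k\}$, so each $k$-subset counted by \paramcount{Induced Unlabelled Subgraph With Property}$(\Phi)$ contributes exactly one satisfying tuple, namely its elements listed in increasing order.

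There is no genuine obstacle; the proof is essentially the remark preceding the proposition, made precise. The only subtlety worth flagging is that the outer disjunction in $\psi_k$ may have as many as $2^{\binom{k}{2}}$ clauses, but this blow-up is harmless since it is absorbed into the parameter $|\psi_k|$ of the resulting \paramcount{MC}$(\Pi_0)$ instance, which under the definition of fpt parsimonious reduction is permitted to depend arbitrarily on $k$.
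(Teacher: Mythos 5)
Your proof is correct and follows exactly the route the paper itself indicates: expressing the count as an instance of \paramcount{MC}$(\Pi_0)$ via a quantifier-free disjunction over $\mathcal{H}_{\phi_k}$ and invoking Theorem \ref{W=A}, with the added linear order for the unlabelled case being precisely the device the paper borrows from Flum and Grohe. The paper only sketches this argument, so your write-up is simply the same reduction carried out in detail.
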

In order to give an fpt parsimonious reduction from \paramcount{Induced Unlabelled Subgraph With Property}$(\Phi)$ to \paramcount{MC}$(\Pi_0)$, we can imitate a technique used in \cite[Lemma 14.31]{flumgrohe}, introducing an additional relation in our structure which imposes an order on the elements in order to ensure that each unlabelled subset is counted exactly once.

\subsubsection{Problem definitions}
\label{probdefs}

In Section \ref{exact}, we consider the following problem.
\\

\hangindent=1cm
\paramcount{Connected Induced Subgraph} \\
\textit{Input:} A graph $G = (V,E)$ and $k \in \mathbb{N}$.\\
\textit{Parameter:} $k$. \\
\textit{Question:} For how many subsets $U \in V^{(k)}$ is $G[U]$ connected? \\

This problem is clearly symmetric, and can be regarded as a particular case of the general problem \paramcount{Induced Unlabelled Subgraph With Property}($\Phi$) introduced above.  Let $\mathcal{T}_k$ be the set of all trees on $k$ vertices with vertices labelled $1,\ldots,k$, and then set $\Phi^{\conn} = (\phi_1^{\conn},\phi_2^{\conn},\ldots)$, with
\begin{equation}
\phi_k^{\conn}(H,\pi) = \bigvee_{T \in \mathcal{T}_k} \bigwedge_{\{j,l\} \in E(T)} \adj_H(\pi(j),\pi(l)),
\label{phi_conn}
\end{equation}
where $\adj_H(u,w) = 1$ if and only if $uw \in E(H)$, and $\adj_H(u,w) = 0$ otherwise.  The tuples $(v_1,\ldots,v_k) \in V^{\underline{k}}$ such that $\phi_k(G[v_1,\ldots,v_k]) = 1$ are then exactly the tuples such that $G[\{v_1,\ldots,v_k\}]$ is connected.

In Section \ref{approximate}, we consider the more general problem of solving \paramcount{\genprob}($\Phi$), whenever $\Phi = (\phi_1,\phi_2,\ldots)$ is a monotone property and there exists a positive integer $t$ such that, for each $\phi_k$, all edge-minimal labelled $k$-vertex graphs $(H,\pi)$ such that $\phi_k(H,\pi) = 1$ satisfy $\tw(H) \leq t$.  Note that \paramcount{Connected Induced Subgraph} is a special case of this more general problem: the set of edge-minimal labelled $k$-vertex graphs $(H,\pi)$ such that $\phi_k^{\conn}(H,\pi) = 1$ is in fact precisely the set of labelled trees on $k$ vertices.

\section{\paramcount{Connected Induced Subgraph} is \#W[1]-complete}
\label{exact}

In this section, we prove the following result.

\begin{thm}
\paramcount{Connected Induced Subgraph} is \#W[1]-complete under fpt Turing reductions.
\label{exact-hard}
\end{thm}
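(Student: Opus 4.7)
Membership in $\#W[1]$ is immediate from Proposition~\ref{in-W}, since $\Phi^{\conn}$ is a symmetric (and in fact computable) property. The work is therefore to establish $\#W[1]$-hardness, and here I would exploit the fact that fpt Turing reductions permit many oracle calls whose outputs can be combined by interpolation. My plan is to reduce from \paramcount{Clique} (which is $\#W[1]$-complete by \cite{flum04}) via the following scheme. Given an instance $(G,k)$, I will construct a family of auxiliary graphs $G^{(\vec{t})}$ parameterised by a tuple $\vec{t}$ of positive integers, make oracle calls \paramcount{Connected Induced Subgraph}$(G^{(\vec{t})},k')$ for various $\vec{t}$ and sizes $k'$ bounded by a function of $k$, and recover the clique count $\#K_k(G)$ by solving a linear system.

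The first concrete step is to classify contributions. For each isomorphism type $H$ of labelled connected graphs on $k$ vertices, let $a_H(G)$ be the number of induced copies of $H$ in $G$; then $\#\text{CIS}_k(G)=\sum_{H\,\text{connected}} a_H(G)$, an equation with too many unknowns in a single query. The idea is to add gadgetry that re-weights these terms differently in different queries. A first attempt is uniform vertex blow-up (replace each $v$ by a clique of $t$ clones), but this only separates contributions by the number of \emph{distinct} original vertices used, so it does not distinguish cliques from, say, paths and therefore fails on its own. To refine it, I would attach to each vertex $v$ a small pendant gadget (for example, a private clique of controlled size) and vary the sizes independently. The count of connected induced subgraphs of a chosen size $k' = k + O_k(1)$ then becomes a polynomial in the size parameters, whose multilinear coefficient on a specific $k$-subset $U \subseteq V(G)$ depends on the labelled isomorphism type of $G[U]$.

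The second step is to set up and invert the system. Running the oracle on many values of the size parameters and extracting coefficients by Lagrange interpolation yields, for each labelled $k$-vertex connected graph $H$, an equation whose left-hand side is a known quantity and whose right-hand side is a known linear combination of the numbers $a_H(G)$ (or rather of their labelled refinements). I would arrange the gadgets so that the transformation matrix between ``oracle outputs'' and ``$a_H$-values'' is a Vandermonde-type (hence invertible) matrix in the blow-up parameters, from which one reads off $a_{K_k}(G)$ — the quantity we seek — in time depending only on $k$ (for the linear-algebra step) times a polynomial factor (for the queries themselves).

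\textbf{Main obstacle.} The delicate point is designing gadgets whose effect on connectivity counts genuinely distinguishes different labelled $k$-vertex connected graphs, and not merely the number of vertices or edges used. Uniform blow-ups are too coarse, so the gadgets must interact non-trivially with the local structure of each $k$-subset; at the same time they must be small enough that the query parameter remains $g(k)$, that the construction runs in fpt time, and that the resulting linear system is provably non-singular. Verifying invertibility — most naturally by a Vandermonde argument over independently varied size parameters — is the technical core of the reduction, and controls the number of queries (which must be bounded by a computable function of $k$) as required by the definition of fpt Turing reduction.
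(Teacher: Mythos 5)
Your membership argument is fine, and your overall strategy (an fpt Turing reduction from \paramcount{Clique} that makes many oracle calls and inverts a linear system) is indeed the right shape. But the heart of the proof --- a gadget that actually makes the linear system meaningful and invertible --- is exactly the part you leave open, and the specific gadgets you float would not work. Pendant cliques (or any gadget attached privately to a single vertex of $G$) can never join two distinct connected components of $G[U]$; hence a $k$-subset $U$ contributes to the connected-induced-subgraph count of the gadgeted graph only when $G[U]$ is already connected, and its contribution then depends only on $U$ and the gadget sizes, not on the isomorphism type of $G[U]$. So varying the sizes and interpolating recovers, at best, the number of connected induced subgraphs on each number of original vertices --- it cannot separate $K_k$ from a $k$-path, and it carries no information at all about disconnected (e.g.\ independent) $k$-subsets. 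Your own ``main obstacle'' paragraph correctly identifies this, but identifying the obstacle is not the same as overcoming it, so as written the proof has a genuine gap precisely at its technical core.

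The paper's resolution is different in both the gadget and the invertibility argument. It first passes to multicoloured versions (\paramcount{Clique} $\to$ \paramcount{Multicolour Independent Set} $\to$ \paramcount{Multicolour Connected Induced Subgraph} $\to$ \paramcount{Connected Induced Subgraph}, the last step by inclusion--exclusion over colour classes). The crucial gadget, indexed by a partition $P_i=\{X_1,\dots,X_\ell\}$ of $[k]$, adds $\ell$ new apex vertices, the $j$-th adjacent to \emph{all} vertices whose colour lies in $X_j$; this is what your pendant gadgets lack, because such an apex genuinely merges the components of $G[U]$ meeting the block $X_j$. The oracle answer for the $i$-th gadgeted graph is then $\sum_{U} f(P(U)\wedge P_i)$, where $P(U)$ records which colours lie in common components of $G[U]$, and the transformation matrix is the meet-matrix $\bigl(f(P_i\wedge P_j)\bigr)$ over the partition lattice. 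Its nonsingularity is not a Vandermonde argument but follows from Haukkanen's determinant formula together with Rota's evaluation $\mu(\hat 0,x)=(-1)^r r!$ of the M\"obius function of the partition lattice. Inverting the system isolates the $U$ with $P(U)=\hat 1$, i.e.\ the colourful independent sets. If you want to complete your write-up, you either need to reconstruct a gadget of this ``connector spanning groups of vertices'' type and a genuine nonsingularity proof, or find a different mechanism; as it stands the Vandermonde/blow-up plan does not close.
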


We begin in Section \ref{lattices} by noting some background results we will need for the proof, before demonstrating \#W[1]-hardness with a series of fpt Turing reductions in Section \ref{reduction}.

\subsection{Lattices and M\"{o}bius functions}
\label{lattices}

In Section \ref{reduction} we will need to consider the lattice formed by partitions of a $k$-element set, with a partial order given by the refinement relation.  A partition of a set $X$ is a set of disjoint subsets $X_1,\ldots,X_r$ of $X$ such that $X = X_1 \cup \cdots \cup X_r$; the subsets $X_1,\ldots,X_r$ are called the \emph{blocks} of the partition.  A partition $P'$ \emph{refines} the partition $P$ if every block of $P'$ is contained in some block of $P$ (note that this ordering is the opposite of that more commonly used for partitions).  In this section we recall some existing results about lattices on arbitrary posets and also more specifically about partition lattices, which we will use in the proof of Lemma \ref{can-invert}.

A lattice is a partially ordered set $(\mathcal{P},\leq)$ satisfying the condition that, for any two elements $x,y \in \mathcal{P}$, both the \emph{meet} and \emph{join} of $x$ and $y$ also belong to $\mathcal{P}$, where the meet of $x$ and $y$, written $x \wedge y$, is defined to be the unique element $z$ such that 
\begin{enumerate}
\item $z \leq x$ and $z \leq y$, and
\item for any $w$ such that $w \leq x$ and $w \leq y$, we have $w \leq z$,
\end{enumerate}
and the join of $x$ and $y$, $x \vee y$, is correspondingly defined to be the unique element $z'$ such that 
\begin{enumerate}
\item $x \leq z'$ and $y \leq z'$, and
\item for any $w$ such that $x \leq w$ and $y \leq w$, we have $z' \leq w$.
\end{enumerate}
We denote by $\hat{1}$ and $\hat{0}$ respectively the ``top'' and ``bottom'' elements of the lattice (the ``top'' element is the unique $x \in \mathcal{P}$ such that, for all $y \in \mathcal{P}$, $y \leq x$, and the bottom element is defined symmetrically).

We will make use of the M\"{o}bius function $\mu$ on a poset, which is defined inductively by
\[
\mu(x,y) = \begin{cases}
				1	& \text{if } x = y \\
				- \sum_{z:x \leq z < y} \mu(x,z) & \text{for } x < y \\
				0	& \text{otherwise}.
			\end{cases}
\]
In Lemma \ref{can-invert}, we will consider a so-called \emph{meet-matrix} on a partition lattice, and make use of the following lemma (which follows immediately from a result of Haukkanen \cite[Cor.~2]{haukkanen96}).  
\begin{lma}
Let $x_1,\ldots, x_n$ be the elements of a finite lattice $(\mathcal{P}, \leq)$, let $f: \mathcal{P} \rightarrow \mathbb{C}$ be a function, and let $A = (a_{ij})_{1 \leq i,j \leq n}$ be the matrix given by $a_{ij} = f(x_i \wedge x_j)$.  Then
$$\det(A) = \prod_{i=1}^n \sum_{x_k \leq x_i} f(x_k)\mu(x_k,x_i),$$
where $\mu$ is the M\"{o}bius function for $\mathcal{P}$.
\label{determinant}
\end{lma}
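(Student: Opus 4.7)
The plan is to prove the formula via a matrix factorisation built on M\"{o}bius inversion. First, I would define $g: \mathcal{P} \to \mathbb{C}$ by $g(x_i) = \sum_{x_k \leq x_i} f(x_k)\mu(x_k,x_i)$. By the standard M\"{o}bius inversion formula for a finite poset, $g$ is precisely the function that satisfies $f(x) = \sum_{y \leq x} g(y)$ for every $x \in \mathcal{P}$; this is essentially forced by the defining recurrence of $\mu$.

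The key observation is then that each entry of $A$ decomposes as a sum of products:
$$a_{ij} = f(x_i \wedge x_j) = \sum_{y \leq x_i \wedge x_j} g(y) = \sum_{k=1}^n [x_k \leq x_i]\,[x_k \leq x_j]\,g(x_k),$$
where $[\,\cdot\,]$ denotes the Iverson bracket. Introducing the zeta-type matrix $B = (b_{ij})$ with $b_{ij} = [x_j \leq x_i]$, and the diagonal matrix $D = \operatorname{diag}(g(x_1), \ldots, g(x_n))$, this identity is exactly the matrix equation $A = BDB^{T}$.

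It then suffices to compute $\det(A) = \det(B)^{2}\det(D)$. Since $\det(A)$ is invariant under simultaneously permuting the rows and columns of $A$ by the same permutation, I may reindex the elements so that $x_1, \ldots, x_n$ form a linear extension of $\leq$, meaning $x_j \leq x_i$ implies $j \leq i$. Under such an ordering, $B$ is lower triangular with $1$'s on the diagonal, so $\det(B) = 1$, and hence $\det(A) = \det(D) = \prod_{i=1}^n g(x_i)$, which is the stated formula.

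The main obstacle is more a matter of bookkeeping than depth: one has to spot the factorisation $A = BDB^T$, which is exactly what the M\"{o}bius inversion identity for $f$ is telling us when substituted into $a_{ij}$. Once that decomposition is in place, the remaining ingredients (existence of a linear extension of any finite partial order, triangularity of $B$ in that ordering, and the elementary determinant calculation) are all routine.
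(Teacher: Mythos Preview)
Your argument is correct. The factorisation $A = BDB^{T}$ via M\"{o}bius inversion, together with the triangularity of $B$ in a linear extension, is exactly the classical route to this meet-matrix determinant formula, and every step you outline is sound; in particular, the equivalence $y \leq x_i \wedge x_j \iff (y \leq x_i \text{ and } y \leq x_j)$ is precisely where the lattice (meet-semilattice) hypothesis is used.

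The paper itself does not give a proof: it simply invokes a result of Haukkanen, so there is no in-paper argument to compare against. Your write-up is therefore strictly more informative than what the paper offers, supplying a self-contained derivation where the paper defers to the literature. (For what it is worth, the cited result is proved along essentially the same lines---the $LDL^{T}$ factorisation of meet matrices---so your approach is the standard one rather than a genuinely different route.)
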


To make use of this result in Section \ref{reduction}, we will need to be able to calculate certain values of the M\"{o}bius function for the special case in which $\mathcal{P}$ is a partition lattice, ordered so that $x \leq y$ whenever $y$ refines $x$; in fact, due to our choice of $f$ below, it will suffice to be able to compute $\mu(\hat{0},x)$ for each partition $x$.  To do this, we will use the following lemma, which is an immediate consequence of two results of Rota \cite[Section 3, Prop.~3 and Section 7, Prop.~3]{rota64}.

\begin{lma}
Let $\mathcal{P}$ be the lattice of partitions of a set with $n$ elements, where $x \leq y$ if and only if $y$ refines $x$.  If $x \in \mathcal{P}$ is of rank $r$, then 
$$\mu(\hat{0},x) = (-1)^{r}r!$$
where the rank of $x$ is equal to the number of blocks of $x$ minus one.
\label{mu-value}
\end{lma}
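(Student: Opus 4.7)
My plan is to reduce the computation to Rota's known evaluation of the M\"{o}bius function on the standard partition lattice $\Pi_n$, taking care of the fact that this paper uses the reverse of the usual orientation. First, I would note that in the convention of the lemma, $\hat{0}$ is the single-block partition and $\hat{1}$ is the all-singletons partition, so that $\mu(\hat{0}, x)$ here coincides with $\mu(x, \hat{1})$ computed in the standard orientation (where now $\hat{1}$ is the single-block partition); since the M\"{o}bius function depends only on the interval as an abstract poset, this relabelling is harmless.

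Next, I would invoke the classical structural fact that, in the standard partition lattice of an $n$-element set, the interval $[x, \hat{1}]$ is isomorphic to the standard partition lattice $\Pi_k$ on the $k$ blocks of $x$: a partition above $x$ corresponds bijectively to a way of grouping the blocks of $x$ into larger blocks. This reduction, which is essentially what Rota's Section~3, Prop.~3 provides, shifts the computation to $\mu_{\Pi_k}(\hat{0}, \hat{1})$ on the partition lattice of a $k$-element set.

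To finish, I would apply Rota's explicit formula (Section~7, Prop.~3) for the top M\"{o}bius value of the partition lattice, $\mu_{\Pi_k}(\hat{0}, \hat{1}) = (-1)^{k-1}(k-1)!$. Writing $k = r+1$, where $r$ is the rank of $x$ (number of blocks minus one), then yields $\mu(\hat{0}, x) = (-1)^r r!$, as claimed.

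The only thing really needing care is the bookkeeping of the order reversal and the correct identification of which partition sits at the top of the lattice in each convention; beyond that, the lemma is essentially a direct consequence of the two cited results of Rota, and I expect no substantial obstacle.
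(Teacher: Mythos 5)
Your proposal is correct and matches the paper's treatment: the paper gives no independent argument, deriving the lemma directly from the same two results of Rota (the interval isomorphism reducing to the partition lattice on the blocks, and the formula $(-1)^{k-1}(k-1)!$ for its top M\"{o}bius value), modulo the same order-reversal bookkeeping you describe. No gap.
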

Since the rank of any element lies in the range $[0,n-1]$, we obtain the following immediate corollary.
\begin{cor}
Let $\mathcal{P}$ be the lattice of partitions of a set with $n$ elements, where $x \leq y$ if and only if $y$ refines $x$.  Then, for all $x \in \mathcal{P}$, $\mu(\hat{0},x) \neq 0$.
\label{mu-non-0}
\end{cor}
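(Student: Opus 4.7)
The plan is to read off the conclusion directly from Lemma \ref{mu-value}. By that lemma, for every $x\in\mathcal{P}$ of rank $r$, we have $\mu(\hat{0},x)=(-1)^{r}r!$. Since $x$ is a partition of an $n$-element set, its rank $r$ (number of blocks minus one) is a nonnegative integer, and in fact $0\le r\le n-1$. In particular $r!$ is a positive integer, so $(-1)^{r}r!$ is a nonzero integer, which gives $\mu(\hat{0},x)\neq 0$.

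There is really no obstacle here: the only things to check are that Lemma \ref{mu-value} applies to every element of $\mathcal{P}$ (which it does, since every partition has a well-defined rank in $\{0,1,\ldots,n-1\}$) and that $r!\neq 0$ for such $r$. The corollary is therefore a one-line consequence, and I would present it as such in the paper rather than repeating the formula derivation.
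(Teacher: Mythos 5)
Your proposal is correct and is exactly the paper's argument: the corollary is stated as an immediate consequence of Lemma \ref{mu-value}, since $\mu(\hat{0},x)=(-1)^r r!$ with $r\in\{0,\ldots,n-1\}$, and $r!\neq 0$. Nothing further is needed.
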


\subsection{The reduction}
\label{reduction}

In this section, we prove Theorem \ref{exact-hard}.  The main work in this proof is to give an fpt Turing reduction from \paramcount{Multicolour Independent Set} to \paramcount{Multicolour Connected Induced Subgraph}, where these two problems are defined as follows.
\\

\hangindent=1cm
\paramcount{Multicolour Independent Set} \\
\textit{Input:} A $k$-coloured graph $G = (V,E)$ and an integer $k$.\\
\textit{Parameter:} $k$. \\
\textit{Question:} For how many colourful subsets $U \in V^{(k)}$ is $U$ an independent set in $G$? \\

\hangindent=1cm
\paramcount{Multicolour Connected Induced Subgraph} \\
\textit{Input:} A $k$-coloured graph $G = (V,E)$ and an integer $k$.\\
\textit{Parameter:} $k$. \\
\textit{Question:} For how many colourful subsets $U \in V^{(k)}$ is $G[U]$ connected? \\

In this reduction we will set up a system of equations, argue that, with an oracle to \paramcount{Multicolour Connected Induced Subgraph}, we can compute the entries, and show that the system can be solved to give the number of colourful independent sets in our graph.  Throughout, we will need to switch between considering colourful subsets of vertices and partitions of $[k]$.  Let $\mathcal{P}_k$ be the set of all partitions of the set $[k]$; thus the cardinality of $\mathcal{P}_k$ is precisely the $k^{th}$ \emph{Bell number}, $B_k$.  We consider these partitions to be partially ordered by the refinement relation, so $P_i \leq P_j$ if $P_j$ refines $P_i$.  Set $P_1 = \hat{0} = \{[k]\}$ and $P_{B_k} = \hat{1} = \{\{1\},\ldots,\{k\}\}$.
	 
Suppose that $G = (V,E)$ is the $k$-coloured graph in an instance of \paramcount{Multicolour Independent Set}.    Given a multicolour subset $U \in V^{(k)}$, we set $P(U)$ to be the partition of $[k]$ in which $i,j \in [k]$ belong to the same set of the partition if and only if the vertices of $U$ with colours $i$ and $j$ belong to the same connected component in $G[U]$.  

We define a function $f: \mathcal{P}_k \rightarrow \{0,1\}$ such that, for any partition $P \in \mathcal{P}_k$,
\[
f(P) = \begin{cases}
		   1  & \text{if $P = \{[k]\}$} \\
		   0  & \text{otherwise.}
		 \end{cases}
\]

In the following lemma we set up our system of equations, and use results from Section \ref{lattices} to demonstrate that the system can be solved to determine the number of colourful independent sets in our graph.

\begin{lma}
Given all values of $\sum_{U \in V^{(k)}} f(P(U) \wedge P')$ for $P' \in \mathcal{P}_k$, we can compute the number of colourful independent sets in $G$ in time $h(k)$, where $h$ is some computable function.
\label{can-invert}
\end{lma}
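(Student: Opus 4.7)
The strategy is to set up a $B_k \times B_k$ linear system relating the given quantities to the counts
\[
N(P) := |\{U \in V^{(k)} : U \text{ is colourful and } P(U) = P\}|,
\]
and then use the two results recalled in Section \ref{lattices} to show the coefficient matrix is nonsingular. The key preliminary observation is that a colourful $U$ is independent in $G$ iff every component of $G[U]$ is a singleton, iff $P(U) = \hat{1}$. Hence the quantity we want to output is precisely $N(\hat{1})$.

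To build the system, I would group the terms of the given sum by the value of $P(U)$, treating non-colourful $U$ as contributing zero (either by restricting the sum, or by declaring $P(U)$ undefined and the summand zero in that case). For every $P' \in \mathcal{P}_k$,
\[
s(P') := \sum_{U \in V^{(k)}} f(P(U) \wedge P') = \sum_{P \in \mathcal{P}_k} N(P) \cdot f(P \wedge P').
\]
Let $A = (a_{P',P})_{P',P \in \mathcal{P}_k}$ be the matrix with $a_{P',P} = f(P \wedge P')$, and let $s$ and $N$ be the vectors $(s(P'))_{P'}$ and $(N(P))_P$ respectively. Then the hypothesis of the lemma is that the entire vector $s = A N$ is known, so recovering $N(\hat{1})$ reduces to showing that $A$ is invertible.

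Invertibility of $A$ is exactly what Lemma \ref{determinant} is designed to give. Applying it to the matrix $A$ and the function $f$ on the partition lattice yields
\[
\det(A) = \prod_{x \in \mathcal{P}_k} \sum_{y \leq x} f(y)\, \mu(y, x).
\]
Because $f$ is supported only at $\hat{0}$, and $\hat{0} \leq x$ for every $x \in \mathcal{P}_k$, the inner sum collapses to $\mu(\hat{0}, x)$; by Corollary \ref{mu-non-0} each such value is nonzero, so $\det(A) \neq 0$. One then solves $A N = s$ by Gaussian elimination in time $h(k)$ depending only on $k$ (since $|\mathcal{P}_k| = B_k$), and reads off $N(\hat{1})$. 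The only nontrivial step is invertibility, and the choice of the indicator function $f$ supported at the bottom element is exactly what reduces that step to the two facts recalled in Section \ref{lattices}; everything else is routine bookkeeping.
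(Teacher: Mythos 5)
Your proposal is correct and follows essentially the same route as the paper: the same meet-matrix $A$ with entries $f(P \wedge P')$, the same identity $AN = s$ obtained by grouping subsets $U$ according to $P(U)$, and the same appeal to Lemma \ref{determinant} together with Corollary \ref{mu-non-0} to get $\det(A) = \prod_{x} \mu(\hat{0},x) \neq 0$. Your explicit handling of non-colourful $U$ (for which $P(U)$ is undefined) is a small tidiness improvement over the paper's presentation, but the argument is otherwise identical.
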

\begin{proof}
For $1 \leq i \leq B_k$, let $N_i$ be the number of subsets $U \in V^{(k)}$ such that $P(U) = P_i$.  Since $P_{B_k} = \{\{1\},\ldots,\{k\}\}$, our goal is then to calculate $N_{B_k}$. 

Let $A = (a_{ij})_{0 \leq i,j \leq B_k}$ be the matrix given by $a_{ij} = f(P_i \wedge P_j)$.  We first claim that $A \cdot \mathbf{N} = \mathbf{z}$ where
$$\mathbf{N} =  (N_0, \ldots, N_{B_k})^T,$$
and
$$\mathbf{z} =  (z_1, \ldots, z_{B_k})^T$$
with $z_i = \sum_{U \in V^{(k)}} f(P(U) \wedge P_i)$.

To see that this is true, observe that the $i^{th}$ element of $A \cdot \mathbf{N}$ is 
\begin{align*}
\sum_{j=1}^{B_k} a_{ij} N_j & = \sum_{j=1}^{B_k} f(P_i \wedge P_j) N_j \\
							& = \sum_{j=1}^{B_k} \sum_{\substack{U \in V^{(k)} \\ P(U) = P_j}} f(P_i \wedge P_j) \\
							& = \sum_{U \in V^{(k)}} f(P_i \wedge P(U)) \\
							& = z_i,
\end{align*}
as required.  Thus it suffices to prove that the matrix $A$ is nonsingular, as then we can compute $N_{B_k}$ as the last element of $A^{-1} \cdot \mathbf{z}$.  

To see that this is indeed the case, first note that, by Lemma \ref{determinant}, 
\begin{align*}
\det(A) & = \prod_{j=1}^{B_k} \sum_{P_i \leq P_j} f(P_i) \mu(P_i,P_j) \\
        & = \prod_{j=1}^{B_k} \mu(\hat{0},P_j).
\end{align*}
Thus it suffices to verify that all values of $\mu(\hat{0},P_j)$ for $P_j \in \mathcal{P}$ are non-zero; but this follows immediately from Corollary \ref{mu-non-0}.  Hence $\det(A) \neq 0$ and $A$ is nonsingular, as required.
\end{proof}

Now we show that, with an oracle to \paramcount{Multicolour Connected Induced Subgraph}, we can compute the values required to set up the equations in the previous lemma, completing the reduction from \paramcount{Multicolour Independent Set} to \paramcount{Multicolour Connected Induced Subgraph}.

\begin{lma}
There exists a computable function $g$ such that, with an oracle to \paramcount{Multicolour Connected Induced Subset}, the value of $\sum_{U \in V^{(k)}} f(P(U) \wedge P_i)$ can be computed, for every $P_i \in \mathcal{P}_k$, in time $g(k) \cdot n^{O(1)}$.  Moreover, for every oracle call, the parameter value is at most $2k$.
\label{can-compute}
\end{lma}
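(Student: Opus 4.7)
The plan is to give, for each fixed $P_i \in \mathcal{P}_k$, a single oracle call to \paramcount{Multicolour Connected Induced Subgraph} on a suitably modified graph that returns the value $\sum_{U \in V^{(k)}} f(P(U) \wedge P_i)$. Since $f$ is the indicator of the bottom element $\hat{0} = \{[k]\}$, and since the refinement order is reversed so that $\wedge$ is the \emph{finest common coarsening}, one has $f(P(U) \wedge P_i) = 1$ precisely when the equivalence relation on $[k]$ generated by ``same block of $P(U)$'' together with ``same block of $P_i$'' has a single class. Unwinding the definition of $P(U)$ gives the equivalent combinatorial condition: if $B_1,\ldots,B_r$ are the blocks of $P_i$, then the multigraph obtained from $G[U]$ by contracting, for each $a \in [r]$, all vertices of $U$ whose colour lies in $B_a$ into a single super-vertex $s_a$ is connected.

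To turn this into a multicolour-connectivity question, I would construct a $(k+r)$-coloured graph $G_i^\star$ as follows: add fresh vertices $w_1,\ldots,w_r$, make $w_a$ adjacent to every $v \in V$ with $\omega(v) \in B_a$, and extend the colouring by $\omega(w_a)=k+a$. In $G_i^\star$ the only vertex of colour $k+a$ is $w_a$, so every colourful $(k+r)$-subset has the form $U \cup W$ with $U \in V^{(k)}$ colourful under the original $\omega$ and $W=\{w_1,\ldots,w_r\}$. A short check shows that $G_i^\star[U \cup W]$ is connected iff the $P_i$-contraction of $G[U]$ is connected: each $w_a$ glues the $U$-vertices whose colours lie in $B_a$ into a single component of $G_i^\star[U \cup W]$, while cross-block edges of $G[U]$ correspond exactly to edges between distinct super-vertices in the contraction. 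Hence
\[
\sum_{U \in V^{(k)}} f(P(U) \wedge P_i) \;=\; \bigl|\{U' \in V(G_i^\star)^{(k+r)} : U' \text{ colourful and } G_i^\star[U'] \text{ connected}\}\bigr|,
\]
which is the output of the oracle on $(G_i^\star, k+r)$; since $r \leq k$, the query parameter is at most $2k$.

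Iterating over all $P_i \in \mathcal{P}_k$ performs $|\mathcal{P}_k|=B_k$ oracle calls, each preceded by a polynomial-time construction of $G_i^\star$, giving the required $g(k)\cdot n^{O(1)}$ bound with $g(k)=B_k \cdot \mathrm{poly}(k)$. The main obstacle is the combinatorial identification in the first paragraph: one must track the reversed refinement convention carefully, so that $\wedge$ is indeed the finest common coarsening and therefore corresponds to contracting $P_i$-blocks in $G[U]$. Once that equivalence is in hand, the construction of $G_i^\star$ and the counting identity are essentially mechanical.
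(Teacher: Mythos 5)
Your construction is exactly the paper's: for each block of $P_i$ you add one fresh, uniquely coloured vertex adjacent to all vertices whose colours lie in that block, and then a single oracle call on the augmented graph (parameter $k+r\le 2k$) returns $\sum_{U} f(P(U)\wedge P_i)$, with $B_k$ calls overall. The paper fleshes out your ``short check'' of the connectivity equivalence (by induction on shortest paths in one direction and via the component partition in the other), but the approach and all key quantities coincide.
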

\begin{proof}
We begin by considering how to compute the values of $\sum_{U \in V^{(k)}} f(P(U) \wedge P_i)$ for a single $P_i \in \mathcal{P}_k$.  Suppose $P_i = \{X_1, \ldots, X_{\ell}\}$, where each $X_j \subset [k]$.

We construct a new coloured graph $G_i$, with vertex set $V(G_i) = V(G) \cup \{x_1,\ldots,x_{\ell}\}$, and where the colouring $c$ of $V(G)$ is extended to $V(G_i)$ by setting $c(x_j) = k+j$ for $1 \leq j \leq \ell$.  $G_i$ has edge-set 
$$E(G_i) = E(G) \cup \bigcup_{1 \leq j \leq \ell} \{x_jv: v \text{ has colour $d$ for some $d \in X_j$}\}.$$

Suppose that $W \subset V(G_i)$ is a multicoloured subset of $V(G_i)$ (so $|W| = k + \ell$, and all vertices in $W$ have distinct colours), and set $U = W \cap V(G)$.  Note that, in order for $W$ to be colourful, we must have $W \setminus U = V(G_i) \setminus V(G) = \{x_1,\ldots,x_{\ell}\}$.  We make the following claim.

\begin{claim}
$G_i[W]$ is connected if and only if $f(P(U) \wedge P_i) = 1$, that is, if and only if the finest partition that is refined by both $P(U)$ and $P_i$ is in fact $\{[k]\}$.
\label{connected-equiv}
\end{claim}
\begin{proof}[Proof of claim]
Suppose first that $G_i[W]$ is connected.  It suffices to prove that, for any $u_1,u_2 \in U$, $c(u_1)$ and $c(u_2)$ belong to the same block of $P(U) \wedge P_i$.  Note that, by connectedness of $G_i[W]$, there must exist a path in $G_i[W]$ from $u_1$ to $u_2$.  We now proceed by induction on the length of a shortest $u_1$-$u_2$ path in $G_i[W]$.  

For the base case, suppose that there is at most one internal vertex on such a path.  In this case, either $u_1$ and $u_2$ belong to the same connected component of $G[U]$ (in which case we are done, since by definition $c(u_1)$ and $c(u_2)$ then belong to the same block of $P(U)$ and hence $P(U) \wedge P_i$), or else there is a single internal vertex $x_j \in W \setminus U$ lying on this path.  Thus $u_1,u_2 \in \Gamma(x_j)$, implying by the construction of $G_i$ that $c(u_1),c(u_2) \in X_j$, so $c(u_1)$ and $c(u_2)$ belong to the same block of $P_i$ and hence of $P(U) \wedge P_i$.  This completes the proof of the base case.

We may now assume that there are at least two internal vertices on a shortest $u_1$-$u_2$ path, and that the result holds for any $u_1',u_2'$ that are connected by a shorter path in $G_i[W]$.  Since there are at least two internal vertices on the shortest $u_1$-$u_2$ path in $G_i[W]$, and no two vertices in $W \setminus U$ are adjacent, there must be some vertex $u_3 \in U \setminus \{u_1,u_2\}$ that lies on this path.  But then there exists a shorter $u_1$-$u_3$ path in $G_i[W]$, implying by the inductive hypothesis that $c(u_1)$ and $c(u_3)$ belong to the same block of $P(U) \wedge P_i$.  Similarly, we see that $c(u_2)$ and $c(u_3)$ belong to the same block of $P(U) \wedge P_i$, and hence it must be that $c(u_1)$ and $c(u_2)$ belong to the same block of $P(U) \wedge P_i$, as required.

We now consider the reverse implication.  Suppose that $G_i[W]$ is not connected, so this graph has connected components with vertex sets $W_1, \ldots, W_r$, where $r \geq 2$.  We claim that the partition $P_W = \{c(W_1 \cap U),\ldots,c(W_r \cap U)\}$ of $[k]$ is refined by both $P(U)$ and $P_i$; hence 
$$P(U) \wedge P_i \geq P_W > \hat{0},$$
so $P(U) \wedge P_i \neq \hat{0}$, as required.  To see that this claim holds, it suffices to check that, for every block $X$ of $P(U)$ or $P_i$, the vertices having colours from $X$ belong to the same component of $G_i[W]$.  If $X \in P(U)$ then this follows immediately, since by definition blocks of $P(U)$ are sets of colours that appear in the same connected component of $G[U]$, and so must certainly belong to the same connected component of $G_i[W]$.  Suppose therefore that $X \in P_i$.  But then $X = X_j$ for some $1 \leq j \leq \ell$, and so all vertices of $U$ with colours from $X$ are adjacent to $x_j$, and hence belong to the same component of $G_i[W]$.  This completes the proof of the claim.  

It therefore follows that $G_i[W]$ is connected if and only if $P(U) \wedge P_i = \hat{0}$, as required.
\renewcommand{\qedsymbol}{$\square$ (Claim \ref{connected-equiv})}
\end{proof}

Thus, by Claim \ref{connected-equiv}, $\sum_{U \in V^{(k)}} f(P(U) \wedge P_i)$ is exactly equal to the number of colourful connected subsets in $G_i$.  Thus, with $B_k < k^k$ calls to an oracle to \paramcount{Multicolour Connected Induced Subgraph}, we can compute the value of $\sum_{U \in V^{(k)}} f(P(U) \wedge P_i)$ for every $P_i \in \mathcal{P}_k$; for each call, the parameter value $k+\ell$ is at most $2k$.
\end{proof}

Using Lemmas \ref{can-compute} and \ref{can-invert}, it is now straightforward to prove the main result of this section.

\begin{proof}[Proof of Theorem \ref{exact-hard}]
The fact that \paramcount{Connected Induced Subgraph} $\in$ \#W[1] follows immediately from Proposition \ref{in-W}, so it suffices to prove that the problem is \#W[1]-hard.  To do this, we give a sequence of fpt Turing reductions from \paramcount{Clique}, shown to be \#W[1]-complete in \cite{flum04}.  
\paragraph*{\paramcount{Clique} \leqfptT \paramcount{Multicolour Independent Set}} 
For this reduction, we mimic the proof of Fellows et al.~\cite{fellows09} that \textsc{Multicolour Clique} is W[1]-complete.  Let $G=(V,E)$ be the graph in an instance of \paramcount{Clique}, with parameter $k$.  Now define $G'$ to be the cartesian product $\overline{G} \times K_k$, in which each vertex in $\overline{G}$ (the complement of $G$) is ``blown up'' to a $k$-clique; the vertices of each such $k$-clique are given distinct colours $\{1,\ldots,k\}$.  It is straightforward to check that if $\alpha$ is the number of multicolour independent sets in $G'$, then the number of $k$-cliques in $G$ is exactly equal to $\alpha / k!$.

\paragraph*{\paramcount{Multicolour Independent Set} \\ \leqfptT \paramcount{Multicolour Connected Induced Subgraph}} 
The reduction follows immediately from Lemmas \ref{can-invert} and \ref{can-compute}.

\paragraph*{\paramcount{Multicolour Connected Induced Subgraph} \\ \leqfptT \paramcount{Connected Induced Subgraph}} 
The number of multicoloured connected induced subgraphs in a graph $G$ can be computed by inclusion-exclusion from the numbers of connected induced subgraphs in the $2^k$ subgraphs of $G$ induced by different combinations of colour-classes.  (Inclusion-exclusion methods have been used in a similar way in, for example, \cite{chen08,dalmau04}.)  Suppose the graph $G$ is coloured with colours $[k]$, and for any $C \subseteq [k]$ let $G_C$ be the subgraph of $G$ induced by the vertices with colours belonging to $C$.  Then, if $N_k(H)$ denotes the number of connected induced $k$-vertex subgraphs in $H$, the number of colourful connected induced subgraphs in $G$ is exactly
$$\sum_{\emptyset \neq C \subseteq [k]} (-1)^{k - |C|}N_k(G_C).$$

Combining these reductions, we have \paramcount{Clique} \leqfptT \paramcount{Connected Induced Subgraph}, and so \paramcount{Connected Induced Subgraph} is \#W[1]-hard under fpt Turing reductions, as required.
\end{proof}

\section{Approximating \paramcount{\genprob}($\Phi$)}
\label{approximate}

In contrast to the hardness result of the previous section, we now give a positive result about the approximability of a class of parameterised counting problems that includes \paramcount{Connected Induced Subgraph}, as well as, for example, the problems of counting the number of induced $k$-vertex Hamiltonian subgraphs, and that of counting the number of induced $k$-vertex non-bipartite subgraphs.  This is in fact a special case of a more general result (Theorem \ref{motif-general-approx}) which will be given in Section \ref{motif} below; we prove Theorem \ref{approx-count} first as it introduces all the techniques but with slightly less complexity than is required for Theorem \ref{motif-general-approx}.

\begin{thm}
Let $\Phi = (\phi_1,\phi_2,\ldots)$ be a monotone property, and suppose there exists a positive integer $t$ such that, for each $\phi_k$, all edge-minimal labelled $k$-vertex graphs $(H,\pi)$ such that $\phi_k(H) = 1$ satisfy $\tw(H) \leq t$.  Then there is an FPTRAS for \paramcount{\genprob}($\Phi$).
\label{approx-count}
\end{thm}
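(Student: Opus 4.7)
The plan is to reduce the problem to a union-counting question via the monotonicity of $\Phi$, and then to approximate the resulting union via a Karp--Luby Monte Carlo estimator sitting on top of an exact tree-decomposition dynamic program.  Let $\mathcal{M}_k$ denote the set of edge-minimal elements of $\mathcal{H}_{\phi_k}$; since $|\mathcal{L}(k)| \leq 2^{\binom{k}{2}}k!$ depends only on $k$ and $\phi_k$ is computable, $\mathcal{M}_k$ can be enumerated in time $f(k)$.  Monotonicity of $\Phi$ immediately yields
\begin{equation*}
T \;:=\; \big|\{(v_1,\ldots,v_k) \in V^{\underline{k}} : \phi_k(G[v_1,\ldots,v_k]) = 1\}\big| \;=\; \Big|\bigcup_{(H,\pi) \in \mathcal{M}_k} A_{(H,\pi)}\Big|,
\end{equation*}
where $A_{(H,\pi)} := \{(v_1,\ldots,v_k) \in V^{\underline{k}} : (H,\pi) \subseteq G[v_1,\ldots,v_k]\}$; every $(H,\pi) \in \mathcal{M}_k$ satisfies $\tw(H) \leq t$ by hypothesis.

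Next, for each fixed $(H,\pi) \in \mathcal{M}_k$, the set $A_{(H,\pi)}$ is in bijection with the set of injective, label-preserving, edge-preserving maps $[k] \to V(G)$, i.e.\ embeddings of a labelled treewidth-$t$ graph.  I would compute $|A_{(H,\pi)}|$ exactly and set up uniform sampling from $A_{(H,\pi)}$ by a standard bag-indexed DP on a tree decomposition of $H$: count all homomorphisms in time $g(k) \cdot n^{O(t)}$, correct for injectivity by inclusion--exclusion over partitions of $[k]$ (of which there are $B_k$, a function of $k$), and use self-reducibility on the DP to sample uniformly in the same time bound.

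With these primitives in hand, estimate $T$ via Karp--Luby.  Let $\alpha := \sum_{(H,\pi) \in \mathcal{M}_k} |A_{(H,\pi)}|$; one draw of the estimator $Y$ is produced by (i) picking $(H,\pi)$ with probability $|A_{(H,\pi)}|/\alpha$, (ii) sampling $x \in A_{(H,\pi)}$ uniformly, (iii) computing the multiplicity $m(x) := |\{(H',\pi') \in \mathcal{M}_k : x \in A_{(H',\pi')}\}|$ by checking the containment condition against each element of $\mathcal{M}_k$, and (iv) returning $Y := \alpha/m(x)$.  A direct computation shows $\mathbb{E}[Y] = T$ and $\mathbb{E}[Y^{2}] \leq \alpha T$, so the squared coefficient of variation of $Y$ satisfies $\mathrm{Var}(Y)/T^{2} \leq \alpha/T - 1 \leq |\mathcal{M}_k| - 1$, which is a function of $k$ alone.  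Averaging $O(|\mathcal{M}_k|\epsilon^{-2})$ independent samples and boosting confidence to $1 - \delta$ by a standard median-of-means trick then yields the desired $(\epsilon,\delta)$-approximation of $T$ in fpt time, i.e.\ an FPTRAS.

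The main obstacle is not the Monte Carlo step, which is generic once the primitives are available, but the rigorous construction of the two primitives for $A_{(H,\pi)}$ themselves: exact counting of, and uniform sampling from, labelled embeddings of a bounded-treewidth labelled graph into $G$.  This is folklore, but must be spelled out with some care, because the labelling $\pi$ of $H$ has to be threaded through the tree-decomposition DP, and the inclusion--exclusion step that enforces injectivity needs to be formalised so as to preserve both the exact count and the sampling distribution.
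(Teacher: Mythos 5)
There is a genuine gap, and it sits exactly where you locate the ``main obstacle'': the two primitives for $A_{(H,\pi)}$ do not exist as stated. Your set $A_{(H,\pi)}$ is the set of labelled embeddings (injective edge-preserving maps) of $H$ into $G$, and you claim to compute $|A_{(H,\pi)}|$ \emph{exactly} in time $g(k)\cdot n^{O(t)}$ by a homomorphism DP followed by inclusion--exclusion over the $B_k$ partitions of $[k]$. But the inclusion--exclusion terms are homomorphism counts of the quotient graphs $H/P$, and quotienting by an arbitrary partition does not preserve bounded treewidth (the blocks need not induce connected subgraphs, so $H/P$ is not a minor of $H$); those terms cannot be evaluated by a width-$t$ DP. This is not a presentational issue: taking $H$ to be the path on $k$ vertices (treewidth $1$), your primitive would exactly count $k$-paths in fpt time, and \paramcount{Path} is \#W[1]-hard (as cited in the introduction of this paper). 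So the exact-counting primitive is impossible under standard assumptions, and with it the Karp--Luby step collapses, since that step needs the exact values $|A_{(H,\pi)}|$ both to form $\alpha$ and to choose an index with probability $|A_{(H,\pi)}|/\alpha$. The same obstruction blocks the ``self-reducibility'' sampler, which needs exact conditional counts of injective extensions.

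The missing idea is colour coding. The paper first writes $A$ as a union over a $k$-perfect family $\mathcal{F}$ of hash functions and over permutations $\sigma\in S_k$, and only then over the edge-minimal $(H,\pi)$; each resulting set $A_{f,\sigma,(H,\pi)}$ is in bijection with the set of \emph{colourful} copies of $H$ (coloured by $\sigma\circ\pi^{-1}$) in $G$ coloured by $f$. Injectivity is enforced by colourfulness rather than by inclusion--exclusion, and exact counting and uniform sampling of colourful copies of a treewidth-$t$ graph genuinely are achievable in time $n^{t+O(1)}$ (Lemmas \ref{arvind1} and \ref{arvind2}). Your Karp--Luby analysis itself (unbiasedness, relative variance bounded by the number of sets, median-of-means boosting) is fine and is essentially what Theorem \ref{K-L} packages, and the colour-coded decomposition only multiplies the number of sets by $|\mathcal{F}|\cdot k! = 2^{O(k)}k!\log n$, which is harmless. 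Replacing your direct decomposition by the colour-coded one would let the rest of your argument go through.
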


Recall from Section \ref{probdefs} that \paramcount{Connected Induced Subgraph} is a special case of this problem, so we obtain the following immediate corollary to Theorem \ref{approx-count}.

\begin{cor}
There is an FPTRAS for \paramcount{Connected Induced Subgraph}.
\end{cor}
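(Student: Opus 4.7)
The plan is to turn \paramcount{\genprob}($\Phi$) into a union-of-sets counting problem and approximate it via a Karp--Luby--Madras style sampler. By monotonicity of $\Phi$, a tuple $(v_1,\ldots,v_k) \in V^{\underline{k}}$ is to be counted iff there exists some edge-minimal $(H,\pi) \in \mathcal{H}_{\phi_k}$ with $(H,\pi) \subseteq G[v_1,\ldots,v_k]$. Let $\mathcal{M}_k$ denote this set of edge-minimal labelled graphs, which can be obtained in time $f_1(k)$ by brute enumeration of the at most $2^{\binom{k}{2}}$ labelled $k$-vertex graphs and testing each against $\phi_k$. For each $(H,\pi) \in \mathcal{M}_k$, set
\[
A_{(H,\pi)} = \{(v_1,\ldots,v_k) \in V^{\underline{k}} : (H,\pi) \subseteq G[v_1,\ldots,v_k]\}.
\]
The target quantity is then $N := \bigl|\bigcup_{(H,\pi) \in \mathcal{M}_k} A_{(H,\pi)}\bigr|$.

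The key subroutine is to compute $|A_{(H,\pi)}|$ exactly and to sample uniformly from $A_{(H,\pi)}$, both in FPT time. Note that $(v_1,\ldots,v_k) \in A_{(H,\pi)}$ is equivalent to the mapping $i \mapsto v_i$ being an embedding into $G$ of the specific graph $H'$ on vertex set $[k]$ with $jl \in E(H') \iff \pi(j)\pi(l) \in E(H)$. Since $\tw(H')=\tw(H)\leq t$, the number of homomorphisms $H' \to G$ is computed in time $g(k)\cdot n^{t+1}$ by standard dynamic programming on a tree decomposition of $H'$; inclusion--exclusion over partitions of $[k]$ converts this into a count of injective homomorphisms, and the same DP tables, traced by a weighted random walk, yield a uniform sampler from $A_{(H,\pi)}$.

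With these subroutines, apply the Karp--Luby--Madras union estimator. Sample a pair $((H,\pi),x)$ by first choosing $(H,\pi)$ with probability $|A_{(H,\pi)}|/\Sigma$, where $\Sigma = \sum_{(H',\pi') \in \mathcal{M}_k} |A_{(H',\pi')}|$, and then $x$ uniformly from $A_{(H,\pi)}$; declare the pair a \emph{hit} iff $(H,\pi)$ is the lexicographically-first element of $\mathcal{M}_k$ with $x \in A_{(H,\pi)}$. Each tuple in the union accounts for exactly one hit pair, so the hit probability equals $N/\Sigma \geq 1/|\mathcal{M}_k|$. A standard Chernoff bound then shows that $O(|\mathcal{M}_k| \cdot \epsilon^{-2} \log(1/\delta))$ samples suffice to estimate $N$ within a $(1\pm\epsilon)$-factor with confidence $1-\delta$; each sample costs $|\mathcal{M}_k|\cdot O(k^2)$ for the membership tests, giving total time $f(k)\cdot \mathrm{poly}(n,1/\epsilon,\log(1/\delta))$, an FPTRAS.

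The main obstacle is the second step: efficient exact counting of, and uniform sampling from, embeddings of a bounded-treewidth pattern graph into $G$. Homomorphism-counting by tree-decomposition DP is classical, but enforcing injectivity through inclusion--exclusion over vertex partitions of $[k]$, and then extracting a genuine uniform sampler from the augmented DP tables, demands careful bookkeeping; this is where the bound $\tw(H)\leq t$ is essential, as it keeps the DP state size polynomial in $n$. The stated Corollary for \paramcount{Connected Induced Subgraph} then follows at once, because the edge-minimal labelled $k$-vertex graphs satisfying $\phi_k^{\conn}$ are precisely the labelled trees, which have treewidth~$1$.
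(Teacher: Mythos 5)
Your overall architecture (write the target set as a union of pieces, then run Karp--Luby with exact per-piece counting and uniform per-piece sampling) is the right one, and your identification of the edge-minimal graphs for $\Phi^{\conn}$ as the labelled trees (treewidth $1$) is correct and is exactly how the paper derives the corollary from Theorem \ref{approx-count}. The gap is in your key subroutine. You index the union only by the edge-minimal patterns $(H,\pi)$, so each piece $A_{(H,\pi)}$ is the set of \emph{all} embeddings of a bounded-treewidth pattern into $G$, and you propose to count these exactly by computing homomorphism counts via dynamic programming on a tree decomposition and then enforcing injectivity by M\"obius inversion over the partition lattice, $\mathrm{inj}(H',G)=\sum_{\Pi}\mu(\hat{0},\Pi)\,\mathrm{hom}(H'/\Pi,G)$. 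This fails because the quotient graphs $H'/\Pi$ need not have treewidth at most $t$: already for $H'$ a $k$-vertex path the quotients include cliques on $\Theta(\sqrt{k})$ vertices (trace an Eulerian circuit of $K_m$, $m$ odd, by a path with $\binom{m}{2}$ edges; each block of the partition is an independent set of the path), so the DP on those terms costs $n^{\Omega(\sqrt{k})}$, not $g(k)n^{O(1)}$. This is not a bookkeeping issue: if your subroutine worked, it would compute $|A_{(H,\pi)}|$ exactly in FPT time for $H$ a labelled path, which is \paramcount{Path} up to the automorphism factor $2$ --- a problem the paper cites as \#W[1]-complete. A second, related defect: even granting the exact count, ``tracing the DP tables by a weighted random walk'' gives a uniform sampler only for the unsigned homomorphism DP; it does not survive the signed inclusion--exclusion, and rejection sampling from homomorphisms down to embeddings can have acceptance probability zero (e.g.\ $P_k$ into a star $K_{1,n-1}$ has many homomorphisms but no embeddings for $k\geq 4$).

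The paper sidesteps both problems by refining the union with colour-coding before applying Karp--Luby: it fixes a $k$-perfect family $\mathcal{F}$ of hash functions (Theorem \ref{hash-functions}) and splits $A$ into pieces $A_{f,\sigma,(H,\pi)}$ indexed also by $f\in\mathcal{F}$ and $\sigma\in S_k$. Within such a piece one only needs to count and sample \emph{colourful} copies of $H$, so injectivity comes for free from colourfulness and the Arvind--Raman routines (Lemmas \ref{arvind1} and \ref{arvind2}) run in time $g(k)n^{t+O(1)}$ with no signed cancellation; the extra factor of $|\mathcal{F}|\cdot k!=2^{O(k)}k!\log n$ in the number of pieces is harmless for the union estimator. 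If you want to keep a decomposition indexed only by the patterns, you need some comparable device to make the per-piece counting and sampling genuinely tractable.
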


The proof of Theorem \ref{approx-count} is adapted from the proof of Arvind and Raman \cite{arvind02} that there is an FPTRAS for \paramcount{Sub($\mathcal{H}$)} whenever $\mathcal{H}$ is a class of graphs of bounded treewidth.  We begin in Section \ref{approx-background} by summarising the existing results we will use, and then in Section \ref{FPTRAS} give a proof of the existence of an FPTRAS in the setting of Theorem \ref{approx-count}.

\subsection{Background}
\label{approx-background}

The algorithm we describe in the next section uses random sampling to count approximately, and relies heavily on the parameterised version of the Karp-Luby result \cite{karp83} on this subject, given by Arvind and Raman.

\begin{thm}[{\cite[Thm.~1]{arvind02}}]
For every positive integer $n$, and for every integer $0 \leq k \leq n$, let $U_{n,k}$ be a finite universe, whose elements are binary strings of length $n^{O(1)}$.  Let $\mathcal{A}_{n,k} = \{A_1, \ldots, A_m\}  \subseteq U_{n,k}$ be a collection of $m = m_{n,k}$ given sets, with $m_{n,k} = l(k)n^{O(1)}$ for some function $l$, let $g:\mathbb{N} \rightarrow \mathbb{N}$ be a computable function and let $d > 0$ be a constant with the following conditions:
\begin{enumerate}
\item There is an algorithm that computes $|A_i|$ in time $g(k)n^d$, for each $i$, and every $\mathcal{A}_{n,k}$.
\item There is an algorithm that samples uniformly at random from $A_i$ in time $g(k)n^d$, for each $i$, and every $\mathcal{A}_{n,k}$.
\item There is an algorithm that takes $x \in U_{n,k}$ as input and determines whether $x \in A_i$ in time $g(k)n^d$, for each $i$, and every $\mathcal{A}_{n,k}$.
\end{enumerate}
Then there is an FPTRAS for estimating the size of $A = A_1 \cup \cdots \cup A_m$.  In particular, for $\epsilon = 1/g(k)$, and $\delta = 1/2^{n^{O(1)}}$, the running time of the FPTRAS algorithm is $(g(k))^{O(1)}n^{O(1)}$.
\label{K-L}
\end{thm}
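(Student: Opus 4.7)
The plan is to adapt the classical Karp--Luby Monte Carlo scheme to the parameterised setting. Form the disjoint union $V = \{(x,i) : i \in [m],\ x \in A_i\}$, so that $|V| = \sum_{i=1}^m |A_i|$. By hypothesis (1) each summand $|A_i|$ is computable in time $g(k) n^d$, and since $m = l(k) n^{O(1)}$, the value $|V|$ can be computed exactly in fpt time. Next, define $f : V \to \{0,1\}$ by $f(x,i) = 1$ iff $i = \min\{j : x \in A_j\}$. Each $x \in A$ contributes to $V$ at most $m$ times (once per set containing it) and is credited by $f$ exactly once, at its smallest witnessing index, so $\sum_{(x,i) \in V} f(x,i) = |A|$. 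The ratio of interest is $\rho := |A|/|V|$, and since $|A| \geq |V|/m$ we have $\rho \geq 1/m$.

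A uniform sample from $V$ can be drawn in fpt time by first selecting an index $i$ with probability $|A_i|/|V|$ and then applying the sampler from hypothesis (2) to produce a uniform element of $A_i$. Given $(x,i) \in V$, evaluating $f(x,i)$ reduces to checking whether $x \in A_j$ for each $j < i$; by hypothesis (3) this costs at most $m \cdot g(k) n^d$. Hence we have efficient unbiased Bernoulli trials for the parameter $\rho$. A standard Chernoff--Hoeffding bound shows that averaging $f$ over
$$N = O\!\bigl(\epsilon^{-2} \rho^{-1} \log(1/\delta)\bigr) = O\!\bigl(m \epsilon^{-2} \log(1/\delta)\bigr)$$
independent samples yields an estimator $\hat\rho$ with $\mathbb{P}\bigl[(1-\epsilon)\rho \leq \hat\rho \leq (1+\epsilon)\rho\bigr] \geq 1 - \delta$; the algorithm returns $\hat\rho \cdot |V|$ as its approximation of $|A|$.

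For the running-time accounting, the cost is $N$ samples each evaluated in time $O(m \cdot g(k) n^d)$, giving total running time $O(m^2 \epsilon^{-2} \log(1/\delta) \cdot g(k) n^d)$. Substituting $m = l(k) n^{O(1)}$ yields a bound of the form $h(k) \cdot \mathrm{poly}(n, 1/\epsilon, \log(1/\delta))$ for some computable $h$, which is exactly the definition of an FPTRAS. Plugging in $\epsilon = 1/g(k)$ and $\delta = 2^{-n^{O(1)}}$ collapses the $\epsilon^{-2}$ and $\log(1/\delta)$ factors into $(g(k))^{O(1)} n^{O(1)}$, establishing the ``in particular'' bound. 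The only non-routine point is verifying that the factor of $m$ appearing in the sample complexity is legitimately absorbed into the fpt parameterisation; this is forced by the hypothesis $m = l(k) n^{O(1)}$, and is the reason the result requires a quantitative bound on the number of sets rather than merely that they be enumerable.
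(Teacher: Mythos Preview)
The paper does not prove this theorem at all: it is quoted verbatim as a background result from Arvind and Raman \cite{arvind02}, and no proof or sketch is supplied in the text. So there is no ``paper's own proof'' to compare against.

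That said, your argument is the standard Karp--Luby estimator, correctly transplanted to the parameterised setting, and it is essentially how Arvind and Raman themselves derive the result. The key ingredients --- computing $|V|$ exactly, sampling a pair $(x,i)$ with the right marginal, the coverage function $f$ crediting each $x$ once, the bound $\rho \geq 1/m$, and the Chernoff sample-complexity estimate --- are all present and used correctly. Your closing remark about why the hypothesis $m = l(k)n^{O(1)}$ is essential is also on point: without it the factor of $m$ in the sample count would not be absorbed into the fpt budget.
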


In proving that there exists an FPTRAS for the problem \paramcount{Sub}($\mathcal{H}$) when $\mathcal{H}$ is a class of graphs of bounded treewidth, Arvind and Raman prove two further results which we will use in Section \ref{FPTRAS}.  Firstly, they give an algorithm to compute the number of colourful copies of a $k$-vertex graph $H$ (of bounded treewidth) in a $k$-coloured graph $G$; it should be noted here that the copies of $H$ are not necessarily induced.

\begin{lma}[{\cite[Lemma 1]{arvind02}}]
Let $G = (V,E)$ be a graph on $n$ vertices that is $k$-coloured by some colouring $f:V(G) \rightarrow [k]$, and let $H$ be a $k$-vertex graph of treewidth $t$ that is $k$-coloured by some colouring $\pi$ such that $H$ is colourful.  Then there is an algorithm taking time $O(c^{t^3}k + n^{t+2}2^{t^2/2})$ to exactly compute the cardinality of the set $\{K: K$ is a colourful $k$-vertex subgraph of $G$ and $K$ is colour-preserving isomorphic to $H$ coloured by $\pi \}$, where $c>0$ is some constant.
\label{arvind1}
\end{lma}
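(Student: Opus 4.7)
The plan is to prove this by dynamic programming on a tree decomposition of $H$. First I would invoke Bodlaender's algorithm to compute a tree decomposition $(T, \mathcal{D})$ of $H$ of width at most $t$ in time $O(c^{t^3} k)$ for some absolute constant $c$; this accounts for the first summand in the claimed running time. The tree decomposition can then be converted in linear time into a nice tree decomposition with $O(k)$ bags of size at most $t+1$, each classified as an introduce, forget, or join node.

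The next step is to set up the DP table. For each node $s$ of $T$, let $V_s \subseteq V(H)$ denote the union of bags in the subtree rooted at $s$, and let $H_s = H[V_s]$. Root $T$ arbitrarily. For each $s$ and each colour-preserving partial embedding $\theta: \mathcal{D}(s) \to V(G)$ (so $f(\theta(v)) = \pi(v)$ for all $v \in \mathcal{D}(s)$, $\theta$ is injective, and $\theta(u)\theta(v) \in E(G)$ whenever $uv \in E(H[\mathcal{D}(s)])$), store
$$T_s[\theta] = \bigl|\{\hat{\theta}: V_s \to V(G) \text{ a colour-preserving embedding of } H_s \text{ with } \hat{\theta}\vert_{\mathcal{D}(s)} = \theta\}\bigr|.$$
Because $H$ is colourful and we require colour preservation, each of the $t+1$ bag vertices must map to a vertex of a prescribed colour; so $|\mathcal{D}(s)|$ has at most $\prod_{v \in \mathcal{D}(s)} |f^{-1}(\pi(v))| \leq n^{t+1}$ entries.

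Then I would implement the standard recurrences. At a leaf (a single-vertex bag), $T_s[\theta]=1$ for each valid $\theta$. At an introduce node adding vertex $v$ to a child's bag, we extend each child entry by trying each of the at most $n$ vertices of colour $\pi(v)$ and keeping only those that produce a valid extension (check at most $t$ new edges and injectivity). At a forget node dropping vertex $v$, we sum the child table over all choices of $\theta(v)$. At a join node with identical bags in both children, we form pointwise products of the two child tables indexed by the same mapping $\theta$; the correctness of this product follows from the tree-decomposition property that the two subtrees share only the bag vertices, so any cross-edges must already be accounted for. Each of these operations at a single bag touches at most $n^{t+1}$ keys and does $2^{O(t^2)}$ work per key (checking at most $\binom{t+1}{2}$ possible edges, testing injectivity, applying the colour constraint), giving $n^{t+2} 2^{t^2/2}$ total across the $O(k)$ bags once subsumed into the $k$ factor.

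The answer is $T_r[\theta]$ summed over all valid $\theta$ at the root $r$. The main obstacle in turning this plan into a full proof is the careful bookkeeping at join nodes -- one must verify that the multiplicativity does not double-count edges and that the final running time really compresses into the stated form $O(c^{t^3}k + n^{t+2} 2^{t^2/2})$ rather than a more naive bound with extra factors of $k$. The colourfulness of $H$ combined with colour-preserving embeddings is crucial: it forces each embedding to be detected uniquely (no automorphism correction) and keeps the DP keyspace bounded by $n^{t+1}$ rather than $n^{t+1}(t+1)!$.
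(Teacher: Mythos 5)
The paper does not prove this lemma at all: it is quoted verbatim from Arvind and Raman (\cite[Lemma 1]{arvind02}) and used as a black box, so there is no in-paper proof to compare against. Your reconstruction --- Bodlaender's algorithm to get the width-$t$ decomposition (accounting for the $O(c^{t^3}k)$ term), followed by a dynamic program over a nice tree decomposition whose tables are indexed by colour-preserving partial embeddings of each bag --- is the standard argument and is essentially what Arvind and Raman do. Your key observations are sound: colourfulness of $H$ together with colour preservation makes every colour-preserving homomorphism automatically injective, which is exactly what makes the pointwise product at join nodes correct (no two vertices from different subtrees can collide in $G$, since they carry different colours) and bounds the keyspace by $n^{t+1}$ per bag, giving $O(k\cdot n^{t+1})\le O(n^{t+2})$ table entries overall with $2^{O(t^2)}$ work per entry. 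The only blemishes are cosmetic (e.g.\ writing $|\mathcal{D}(s)|$ where you mean the number of table entries at $s$); the argument itself is complete in outline and correct.
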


The graph $H_1$ with colouring $\omega_1$ is said to be \emph{colour-preserving isomorphic} to $H_2$ with colouring $\omega_2$ if there exists an isomorphism $\theta$ from $H_1$ to $H_2$ such that, for all $u \in V(H)$,  $\omega_1(u) = \omega_2(\theta(u))$.  We will more generally say that a function $\theta$ from the vertices of the graph $H_1$, coloured by $\omega_1$, to the vertices of the graph $H_2$, coloured by $\omega_2$, is \emph{colour-preserving} if, for all $u \in V(H_1)$, $\omega_1(u) = \omega_2(\theta(u))$.

Secondly, they describe an algorithm to sample uniformly at random from the set of colourful copies of $H$ in $G$.

\begin{lma}[{\cite[Lemma 2]{arvind02}}]
Let $G=(V,E)$ be a graph on $n$ vertices that is $k$-coloured by some colouring $f: V(G) \rightarrow [k]$, and let $H$ be a $k$-vertex graph of treewidth $t$ that is $k$-coloured by some colouring $\pi$ such that $H$ is colourful.  Then there is an algorithm taking time $O(c^{t^3}k + n^{t + O(1)}2^{t^2/2})$ time to sample uniformly at random from the set $\{K : K$ is a colourful $k$-vertex subgraph of $G$ under the colouring $f$ and $K$ is colour-preserving isomorphic to $H$ coloured by $\pi \}$.
\label{arvind2}
\end{lma}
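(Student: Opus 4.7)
My plan is to sample by \emph{self-reducibility}, treating the counting algorithm of Lemma \ref{arvind1} as an oracle and building a colourful colour-preserving copy of $H$ in $G$ one coordinate at a time, weighting each choice by the relevant conditional count. Order the colours as $1,2,\ldots,k$, and let $v_i = \pi^{-1}(i)$ be the unique vertex of $H$ with colour $i$. Any colourful $k$-vertex subgraph of $G$ that is colour-preserving isomorphic to $H$ is determined by its colour representatives $(u_1,\ldots,u_k)$, where $f(u_i)=i$ and $u_iu_j \in E(G)$ for every $v_iv_j \in E(H)$.

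Concretely, at step $i$ I would already have chosen $(u_1,\ldots,u_{i-1})$. For each candidate $u \in V(G)$ with $f(u)=i$, form the graph $G_u^{(i)}$ by deleting from $G$ every vertex whose colour lies in $\{1,\ldots,i\}$ except for $u_1,\ldots,u_{i-1},u$, keeping the inherited colouring. Any colourful colour-preserving copy of $H$ in $G_u^{(i)}$ must use $u_j$ as its colour-$j$ vertex for $j<i$ and $u$ as its colour-$i$ vertex, so the count $N(u)$ returned by Lemma \ref{arvind1} on input $(G_u^{(i)},H)$ is exactly the number of full colourful copies of $H$ in $G$ that extend the partial tuple $(u_1,\ldots,u_{i-1},u)$. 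I then sample $u_i=u$ with probability $N(u)/\sum_{u'} N(u')$, and iterate.

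Correctness follows from a telescoping argument. Writing $M_i$ for the number of colourful copies that extend $(u_1,\ldots,u_i)$, the probability of producing any particular target tuple is $\prod_{i=1}^{k} M_i/M_{i-1} = M_k/M_0 = 1/M_0$, since $M_k=1$ and $M_0$ is the total number of colourful copies. Hence the output distribution is uniform on the target set, as required.

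The main obstacle lies in the running-time bookkeeping. The counting algorithm of Lemma \ref{arvind1} carries an $O(c^{t^3}k)$ term for computing a tree decomposition of $H$, and my procedure will make $O(kn)$ oracle calls (one per candidate image at each of $k$ steps). The key observation is that this tree decomposition depends only on $H$, not on the host graph, so it can be computed once up front and reused across every call; each subsequent invocation then costs only $O(n^{t+2}2^{t^2/2})$ for its dynamic-programming part. Summing gives total time $O(c^{t^3}k + kn \cdot n^{t+2}2^{t^2/2}) = O(c^{t^3}k + n^{t+O(1)}2^{t^2/2})$, matching the stated bound. Without this amortisation the tree-decomposition cost would blow up to $c^{t^3}kn$ and the claimed bound would fail.
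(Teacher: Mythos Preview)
The paper does not actually prove this lemma: it is quoted verbatim as \cite[Lemma 2]{arvind02} and used as a black box in Section~\ref{approx-background}. So there is no in-paper proof to compare against.

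That said, your self-reducibility argument is correct and is the standard way to derive sampling from exact counting in this setting. The identification of colourful colour-preserving copies of $H$ with tuples $(u_1,\ldots,u_k)$ is sound (the colour-preserving isomorphism forces the bijection, and the edge constraint is exactly $u_iu_j\in E(G)$ for each $v_iv_j\in E(H)$), the restricted graphs $G_u^{(i)}$ do force the first $i$ coordinates as you claim, and the telescoping product gives the uniform distribution. The running-time analysis is also fine: amortising the tree-decomposition computation across calls is legitimate since it depends only on $H$, and the remaining $O(kn)$ invocations of the dynamic program give $kn\cdot n^{t+2}2^{t^2/2}\le n^{t+4}2^{t^2/2}$ (using $k\le n$, which we may assume since otherwise the target set is empty), matching the stated $n^{t+O(1)}2^{t^2/2}$.

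This is almost certainly the argument Arvind and Raman themselves give, as their Lemma~2 visibly sits on top of their Lemma~1 with the characteristic ``$+O(1)$'' in the exponent that self-reduction produces.
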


The approximation algorithm in \cite{arvind02} also uses the concept of \emph{$k$-perfect} families of hash functions.  A family $\mathcal{F}$ of hash functions from $[n]$ to $[k]$ is said to be $k$-perfect if, for every subset $A \subset [n]$ of size $k$, there exists $f \in \mathcal{F}$ such that the restriction of $f$ to $A$ is injective.  In the following section, we will use the following bound on the size of such a family of hash functions, proved in \cite{alon95}.

\begin{thm}
For all $n, k \in \mathbb{N}$ there is a $k$-perfect family $\mathcal{F}_{n,k}$ of hash functions from $[n]$ to $[k]$ of cardinality $2^{O(k)} \cdot \log n$.  Furthermore, given $n$ and $k$, a representation of the family $\mathcal{F}_{n,k}$ can be computed in time $2^{O(k)} \cdot  n \log n$.
\label{hash-functions}
\end{thm}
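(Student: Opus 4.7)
The plan is the classical two-stage composition from the colour-coding literature. First I would build a family $\mathcal{F}_1$ of hash functions $[n]\to[k^2]$ that is $k$-perfect (for every $A \in [n]^{(k)}$, some $f \in \mathcal{F}_1$ is injective on $A$), and separately a $k$-perfect family $\mathcal{F}_2$ of hash functions $[k^2]\to[k]$. The composed family $\mathcal{F} := \{f_2 \circ f_1 : f_1 \in \mathcal{F}_1,\ f_2 \in \mathcal{F}_2\}$ is then $k$-perfect from $[n]$ to $[k]$: given any $A \in [n]^{(k)}$, picking $f_1 \in \mathcal{F}_1$ injective on $A$ makes $f_1(A)$ a $k$-subset of $[k^2]$, and picking $f_2 \in \mathcal{F}_2$ injective on $f_1(A)$ then makes $f_2 \circ f_1$ injective on $A$. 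Since $|\mathcal{F}| = |\mathcal{F}_1|\cdot |\mathcal{F}_2|$, the target is $|\mathcal{F}_1| = O(k \log n)$ and $|\mathcal{F}_2| = 2^{O(k)}$.

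For $\mathcal{F}_2$ I would use the probabilistic method. A uniformly random $f:[k^2]\to[k]$ is injective on a fixed $k$-subset $B \subseteq [k^2]$ with probability $k!/k^k \geq e^{-k}$ (Stirling), so by a union bound over the $\binom{k^2}{k}$ choices of $B$, a family of $N = 2^{O(k)}$ independent uniform functions is $k$-perfect with positive probability. Because both the domain $[k^2]$ and the required size $N$ depend only on $k$, this existence proof can be derandomised by brute-force search over all candidate families, in time bounded by a (large but) computable function of $k$ alone.

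For $\mathcal{F}_1$ I would use an explicit number-theoretic construction. Taking a prime $p \in [n, 2n]$ and the family of linear hash functions $h_{a,b}(x) = ((ax+b) \bmod p) \bmod k^2$ indexed by suitable $(a,b)$, a standard counting argument shows that, for any fixed $k$-subset $A \subseteq [n]$, a constant fraction of the index values makes $h_{a,b}$ injective on $A$. A random subset of $O(k \log n)$ indices is therefore $k$-perfect with high probability; this can be derandomised in time $2^{O(k)} n \log n$ via the method of conditional expectations, or by quoting the explicit construction of Fredman--Koml\'os--Szemer\'edi used in \cite{alon95}. Each $h_{a,b}$ is representable in $O(\log n)$ bits and evaluable on a single input in polylogarithmic time, yielding the claimed total construction time.

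The main obstacle is Stage 1: the probabilistic existence of a family of size $O(k \log n)$ mapping $[n] \to [k^2]$ is immediate, but obtaining an explicit deterministic construction in near-linear time requires the nontrivial derandomisation machinery of Fredman--Koml\'os--Szemer\'edi (or, alternatively, the splitters of Naor--Schulman--Srinivasan). Since the statement is quoted directly from \cite{alon95}, the cleanest proof imports their construction of $\mathcal{F}_1$ as a black box and combines it with the easy Stage 2 probabilistic argument above.
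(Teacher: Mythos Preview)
The paper does not give a proof of this theorem at all: it is stated as a known result and attributed to \cite{alon95}, so there is no argument in the paper to compare your proposal against.  Your sketch is essentially the standard two-stage construction underlying that reference, and you correctly identify that the nontrivial ingredient is the explicit, efficiently computable first-stage family (reducing the universe from $[n]$ to something of size polynomial in $k$), which in \cite{alon95} is obtained via the Fredman--Koml\'os--Szemer\'edi / Schmidt--Siegel machinery; the second stage, over a universe depending only on $k$, is indeed handled by the probabilistic-plus-brute-force argument you describe.  One small caution: your claim that a family of size $O(k\log n)$ suffices for the first stage via ``method of conditional expectations'' in time $2^{O(k)} n\log n$ glosses over real work---the references you cite do establish what is needed, but the derandomisation is not as routine as you make it sound, so in a written proof you should quote the relevant construction explicitly rather than suggest it can be reproduced ad hoc.
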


\subsection{An FPTRAS for \paramcount{\genprob}($\Phi$)}
\label{FPTRAS}

In this section we use Theorem \ref{K-L} to give a proof of Theorem \ref{approx-count}, that is, we show that there exists an FPTRAS for \paramcount{\genprob}($\Phi$) whenever $\Phi = (\phi_1,\phi_2,\ldots)$ is a monotone property and there exists a positive integer $t$ such that, for each $\phi_k$, all edge-minimal labelled $k$-vertex graphs $(H,\pi)$ such that $\phi_k(H) = 1$ satisfy $\tw(H) \leq t$.

When considering this problem, we will take $U_{n,k}$ to be the set of all $k$-tuples of $[n]$; thus an element of $U_{n,k}$ can be regarded as a choice of a $k$-tuple of vertices in an $n$-vertex graph.  Our goal is to approximate the cardinality of the set $A$, where
$$A = \{(v_1,\ldots,v_k) \in V^{\underline{k}}: \phi_k(G[v_1,\ldots,v_k]) = 1 \}.$$
Thus, in order to make use of Theorem \ref{K-L} to prove the existence of an FPTRAS for \paramcount{\genprob}($\Phi$), we need to express $A$ as a union of sets $A_1, \ldots, A_{m_{n,k}}$ which satisfy the conditions of the theorem.

First, we will write $A$ as a union of sets indexed by a family of $k$-perfect hash functions from $V$ to $[k]$, which we shall regard as vertex-colourings of $G$.  Recall from Theorem \ref{hash-functions} that we can fix such a family $\mathcal{F}$ with $|\mathcal{F}| = 2^{O(k)}\log n$.  Since, by definition of a family of $k$-perfect hash functions, there must exist for every $U \in V^{(k)}$ some element $f_U \in \mathcal{F}$ such that the restriction of $f_U$ to $U$ is injective, it is clear that we can write
$$A = \bigcup_{f \in \mathcal{F}} A_f,$$
where we set
\begin{align*}
A_f = \{(v_1,\ldots,v_k) \in V^{\underline{k}}: \quad & \phi_k(G[v_1,\ldots,v_k]) = 1 \text{ and } \{f(v_1),\ldots,f(v_k)\} = [k] \}. \\
\end{align*}
We can further write $A_f$ as a (disjoint) union of smaller sets, conditioning on the precise injective colouring of $\{v_1,\ldots,v_k\}$ under $f$ (recall that $S_k$ denotes the set of permutations on $[k]$):
$$A_f = \bigcup_{\sigma \in S_k} A_{f,\sigma},$$
where
\begin{align*}
A_{f,\sigma} = \{(v_1,\ldots,v_k) \in V^{\underline{k}}: \quad & \phi_k(G[v_1,\ldots,v_k]) = 1, \text{ and, for each } \\
												 & 1 \leq i \leq k, f(v_i) = \sigma(i)\}.
\end{align*}
In order to obtain an FPTRAS, we will need to use the assumption of Theorem \ref{FPTRAS}, namely that $\Phi$ is monotone and that every edge-minimal labelled subgraph $(H,\pi) \in \mathcal{L}(k)$ such that $\phi_k(H,\pi) = 1$ satisfies $\tw(H) \leq t$.  If we write $\mathcal{H}_k$ for the set of edge-minimal labelled subgraphs satisfying $\phi_k$, this characterisation of $\phi_k$ implies that $\phi_k(G[v_1,\ldots,v_k]) = 1$ if and only if there is some $(H,\pi) \in \mathcal{H}_k$ such that $(H,\pi) \subseteq G[v_1,\ldots,v_k]$.  We can therefore write $A_{f,\sigma}$ as a union over sets indexed by elements of $\mathcal{H}_k$:
$$A_{f,\sigma} = \bigcup_{(H,\pi) \in \mathcal{H}_k} A_{f,\sigma,(H,\pi)},$$
where
\begin{align*}
A_{f,\sigma,(H,\pi)} = \{(v_1,\ldots,v_k) \in V^{\underline{k}}: \quad & (H,\pi) \subseteq G[v_1,\ldots,v_k] \text{ and,}\\
														 & \text{for each } 1 \leq i \leq k, f(v_i) = \sigma(i)\}.
\end{align*}
In words, the pair of conditions in the definition above can be restated as follows: the mapping taking the vertex $\pi(i)$ of $H$ (for each $1 \leq i \leq k$) to the vertex in $\{v_1,\ldots,v_k\}$ which receives colour $\sigma(i)$ under $f$ is, in fact, an embedding.  If we then equip $H$ with a colouring $\omega$, where $\omega = \sigma \circ \pi^{-1}$, we can equivalently describe this embedding as the mapping which takes each vertex of $u$ of $H$ to the unique vertex $v_i$ such that $\omega(u) = f(v_i)$, so the mapping is the unique colour-preserving bijection from $V(H)$ to $\{v_1,\ldots,v_k\}$ (with respect to colourings $\omega$ and $f$).  With this characterisation, it is clear that the condition that $(H,\pi) \subseteq G[v_1,\ldots,v_k]$ is exactly the same as the requirement that $H$ with colouring $\omega$ is colour-preserving isomorphic to some subgraph $K$ of $G[\{v_1,\ldots,v_k\}]$.  Thus,
\begin{align*}
A_{f,\sigma,(H,\pi)} = \{(v_1,\ldots,v_k) \in V^{\underline{k}}: \quad & \exists K \subseteq G[\{v_1,\ldots,v_k\}] \text{ such that $H$ with } \\
														 & \text{colouring $\omega = \sigma \circ \pi^{-1}$ is colour-preserving}\\
														 & \text{isomorphic to $K$ with colouring $f$}\}.
\end{align*}
These are the sets that will make up the collection $\mathcal{A}_{n,k}$ in Theorem \ref{K-L}; more precisely, we set 
$$\mathcal{A}_{n,k} = \{A_{f,\sigma,(H,\pi)}: f \in \mathcal{F}, \sigma \in S_k, (H,\pi) \in \mathcal{H}_k\},$$
and it then follows from the reasoning above that
\begin{equation}
A = \bigcup \mathcal{A}_{n,k}.
\label{A-union}
\end{equation}
Note that
$$|\mathcal{A}_{n,k}| \leq 2^{O(k)}\log n \cdot k! \cdot 2^{\binom{k}{2}} = l(k)n^{O(1)}$$
for an appropriate function $l$ (since we can choose $\mathcal{F}$ with $|\mathcal{F}| = 2^{O(k)}\log n$, there are $k!$ permutations on a set of size $k$, and there are $2^{\binom{k}{2}}$ labelled graphs on a fixed set of $k$ vertices), as required in the premise of Theorem \ref{K-L}.

Before going on to demonstrate that this collection of sets $\mathcal{A}_{n,k}$ satisfies the three conditions of Theorem \ref{K-L}, it will be useful to make a further observations about the elements of $\mathcal{A}_{n,k}$.  Note that there can be at most one subgraph $K \subseteq G[\{v_1,\ldots,v_k\}]$ such that $H$ with colouring $\omega$ is colour-preserving isomorphic to $K$ (since the colourings determine precisely the mapping between the two graphs), so if we set 
\begin{align*}
A'_{f,\sigma,(H,\pi)} = \{K: \quad & K \text{ is a colourful $k$-vertex subgraph of $G$ under} \\
								   & \text{the colouring $f$, and $K$ is colour-preserving} \\
								   & \text{isomorphic to $H$ with colouring $\omega = \sigma \circ \pi^{-1}$} \},
\end{align*}
we have
\begin{equation}
|A_{f,\sigma,(H,\pi)}| = |A'_{f,\sigma,(H,\pi)}|.
\label{sets-equal}
\end{equation}
Moreover, we can define a bijection $\theta: A'_{f,\sigma,(H,\pi)} \rightarrow A_{f,\sigma,(H,\pi)}$ by setting
\begin{equation}
\theta(K) = \left(((f|_{V(K)})^{-1} \circ \sigma)(1),\ldots,((f|_{V(K)})^{-1} \circ \sigma)(k)\right).
\label{bijection-defn}
\end{equation}

We are now ready to show that the sets $\mathcal{A}_{n,k}$ defined above do satisfy the three conditions of Theorem \ref{K-L}.  The first two conditions will follow easily from results proved in \cite{arvind02}.

\begin{lma}
For each $\mathcal{A}_{n,k}$ and every $A_i \in \mathcal{A}_{n,k}$, there exists  an algorithm that computes $|A_i|$ in time $g_1(k)n^{d_1}$, where $d_1$ is an integer and $g_1: \mathbb{N} \rightarrow \mathbb{N}$ is a computable function, for each $i$ and every $\mathcal{A}_{n,k}$.
\label{condition1}
\end{lma}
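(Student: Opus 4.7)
The plan is to reduce the computation of $|A_{f,\sigma,(H,\pi)}|$ directly to an application of Lemma \ref{arvind1}. By equation (\ref{sets-equal}), we already know $|A_{f,\sigma,(H,\pi)}| = |A'_{f,\sigma,(H,\pi)}|$, so it suffices to count colourful $k$-vertex subgraphs $K$ of $G$ (under the colouring $f$) which are colour-preserving isomorphic to $H$ equipped with $\omega = \sigma \circ \pi^{-1}$. This is exactly the quantity counted by Lemma \ref{arvind1}, provided its hypotheses are met.

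Verifying those hypotheses is a short check. First, because $(H,\pi)$ is an edge-minimal element of $\mathcal{H}_k$, the standing assumption on $\Phi$ gives $\tw(H) \leq t$. Second, since $\pi : [k] \to V(H)$ and $\sigma \in S_k$ are both bijections, their composition $\omega = \sigma \circ \pi^{-1} : V(H) \to [k]$ is a bijection, so $H$ is colourful under $\omega$. Third, $f : V(G) \to [k]$ is the required $k$-colouring of $G$ (no surjectivity is needed, only that values lie in $[k]$). Thus all the premises of Lemma \ref{arvind1} are satisfied.

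Applying Lemma \ref{arvind1} therefore yields $|A'_{f,\sigma,(H,\pi)}|$, and hence $|A_{f,\sigma,(H,\pi)}|$, in time $O(c^{t^3} k + n^{t+2} 2^{t^2/2})$ for some absolute constant $c$. Since $t$ is a fixed constant depending only on $\Phi$, this bound has the form $g_1(k) \cdot n^{d_1}$ with $d_1 = t+2$ and $g_1$ a computable function, as required.

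There is essentially no obstacle here: the nontrivial work has already been done in setting up the decomposition $A = \bigcup \mathcal{A}_{n,k}$ and in the bijection (\ref{bijection-defn}), and the lemma now follows almost immediately by invoking the prior result from \cite{arvind02}. The only small care needed is to confirm that the colouring $\omega$ inherited from $\sigma$ and $\pi^{-1}$ really makes $H$ colourful, which is what unlocks Lemma \ref{arvind1}.
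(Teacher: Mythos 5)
Your proof is correct and follows essentially the same route as the paper's: both reduce to computing $|A'_{f,\sigma,(H,\pi)}|$ via \eqref{sets-equal} and then invoke Lemma \ref{arvind1} using the treewidth bound on $H$. Your additional verification that $\omega = \sigma \circ \pi^{-1}$ makes $H$ colourful is a detail the paper leaves implicit, but the argument is the same.
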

\begin{proof}
Recall from \eqref{sets-equal} that, for each $A_{f,\sigma,(H,\pi)} \in \mathcal{A}_{n,k}$, we have 
$$|A_{f,\sigma,(H,\pi)}| = |A'_{f,\sigma,(H,\pi)}|,$$
and so it suffices to compute $|A'_{f,\sigma,(H,\pi)}|$ in the permitted time.  Since, by assumption, $H$ has treewidth at most $t$, we can immediately apply Lemma \ref{arvind1} to see that there exists an algorithm to compute the cardinality of $A'_{f,\sigma,(H,\pi)}$ in time at most $O(c^{t^3}k + n^{t+2}2^{t^2/2})$ where $c > 0$ is a constant.
\end{proof}

We now show that the second condition is satisfied.

\begin{lma}
For each $\mathcal{A}_{n,k}$ and every $A_i \in \mathcal{A}_{n,k}$, there exists  an algorithm that samples uniformly at random from $A_i$ in time $g_2(k)n^{d_2}$, where $d_2$ is an integer and $g_2: \mathbb{N} \rightarrow \mathbb{N}$ is a computable function, for each $i$ and every $\mathcal{A}_{n,k}$.
\label{condition2}
\end{lma}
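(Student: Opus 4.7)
The plan is to reduce sampling from $A_{f,\sigma,(H,\pi)}$ to sampling from $A'_{f,\sigma,(H,\pi)}$, and then invoke Lemma \ref{arvind2} directly. Recall that $A'_{f,\sigma,(H,\pi)}$ is exactly the set of colourful $k$-vertex subgraphs of $G$ (under the colouring $f$) that are colour-preserving isomorphic to $H$ equipped with the colouring $\omega = \sigma \circ \pi^{-1}$. Since, by hypothesis on $\Phi$, every $(H,\pi) \in \mathcal{H}_k$ satisfies $\tw(H) \leq t$, and $H$ is colourful under $\omega$ by construction (as $\sigma$ and $\pi$ are both bijections), Lemma \ref{arvind2} applies: we can sample a uniformly random element $K \in A'_{f,\sigma,(H,\pi)}$ in time $O(c^{t^3}k + n^{t+O(1)} 2^{t^2/2})$.

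Having drawn such a $K$, I would then output $\theta(K)$, where $\theta : A'_{f,\sigma,(H,\pi)} \rightarrow A_{f,\sigma,(H,\pi)}$ is the bijection defined in \eqref{bijection-defn}. Since $f$ restricted to $V(K)$ is a bijection onto $[k]$, computing $\theta(K)$ amounts to identifying, for each $i \in [k]$, the unique vertex of $K$ coloured $\sigma(i)$ under $f$; this takes time $O(k \cdot n)$. Because $\theta$ is a bijection and $K$ is uniform on $A'_{f,\sigma,(H,\pi)}$, the output $\theta(K)$ is uniform on $A_{f,\sigma,(H,\pi)}$, as required.

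Combining these two steps, the total running time is $O(c^{t^3}k + n^{t+O(1)} 2^{t^2/2})$, which is of the form $g_2(k) n^{d_2}$ for a suitable computable $g_2$ and integer $d_2$ (depending only on $t$, which is a constant fixed by $\Phi$). There is no real obstacle here: both the sampling subroutine and the bijection $\theta$ have already been supplied, so the lemma is essentially a matter of composing them and verifying that the composition preserves uniformity and fits within the required time bound.
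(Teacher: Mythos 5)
Your proposal is correct and follows essentially the same route as the paper's own proof: sample uniformly from $A'_{f,\sigma,(H,\pi)}$ via Lemma \ref{arvind2} and push the sample through the bijection $\theta$ from \eqref{bijection-defn}, noting that this preserves uniformity and adds only negligible time.
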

\begin{proof}
It follows immediately from the definition of $A'_{f,\sigma,(H,\pi)}$, together with the assumption that $H$ has treewidth at most $t$, that, by Lemma \ref{arvind2}, there is an algorithm taking time $O(c^{t^3}k + n^{t+O(1)}2^{t^2/2})$ (where $c > 0$ is a constant) to sample uniformly at random from $A'_{f,\sigma,(H,\pi)}$.  Since $\theta$ (as defined in \eqref{bijection-defn}) gives a bijection from $A'_{f,\sigma,(H,\pi)}$ to $A_{f,\sigma,(H,\pi)}$, applying $\theta$ to the output of this sampling algorithm will give an element of $A_{f,\sigma,(H,\pi)}$ chosen uniformly at random; note that applying $\theta$ will require additional time depending only on $k$.
\end{proof}

Thus, in order to apply Theorem \ref{K-L}, it remains to check that our sets satisfy the third condition; we demonstrate this in the following lemma.

\begin{lma}
For each $\mathcal{A}_{n,k}$ and every $A_i \in \mathcal{A}_{n,k}$, there is an algorithm that takes $\mathbf{v} \in U_{n,k}$ as input and determines whether $\mathbf{v} \in A_i$ in time $g_3(k)n^{d_3}$, where $d_3$ is an integer and $g_3: \mathbb{N} \rightarrow \mathbb{N}$ is a computable function.
\label{condition3}
\end{lma}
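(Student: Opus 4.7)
The plan is to unpack the definition of $A_{f,\sigma,(H,\pi)}$ and observe that each of its defining conditions can be verified in time polynomial in $k$ (with at most $O(k^2)$ queries to $G$). Given input $\mathbf{v} = (v_1,\ldots,v_k) \in U_{n,k}$, I would proceed in three stages.

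First, verify that $\mathbf{v} \in V^{\underline{k}}$, i.e. that the coordinates are pairwise distinct; this is an $O(k^2)$ check. Next, verify the colour condition: for each $1 \leq i \leq k$, check that $f(v_i) = \sigma(i)$. This takes $O(k)$ time once $f$ and $\sigma$ are to hand. If either stage fails, reject.

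The only substantive point is how to check the third condition, namely that there exists $K \subseteq G[\{v_1,\ldots,v_k\}]$ such that $H$ (coloured by $\omega = \sigma \circ \pi^{-1}$) is colour-preserving isomorphic to $K$ (coloured by $f$). The key observation is that, having already ensured $f(v_i) = \sigma(i)$, the colouring $f$ assigns a distinct colour to each $v_i$ and the colouring $\omega$ assigns a distinct colour to each vertex of $H$; so any colour-preserving bijection between $V(H)$ and $\{v_1,\ldots,v_k\}$ is uniquely forced. Specifically, the vertex $\pi(j) \in V(H)$ has colour $\omega(\pi(j)) = \sigma(j) = f(v_j)$, so it must map to $v_j$. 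Under this forced bijection, the required $K$ exists if and only if $(H,\pi) \subseteq G[v_1,\ldots,v_k]$ in the sense already defined in the notation section, which in turn means that for every edge $\pi(j)\pi(l) \in E(H)$ we have $v_jv_l \in E(G)$.

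The algorithm therefore iterates over the (at most $\binom{k}{2}$) edges of $H$ and performs one adjacency query in $G$ per edge; each such query takes at most $O(n)$ time in the standard representations of $G$. The overall running time is bounded by $O(k^2) + O(k^2 n)$, which fits into the form $g_3(k)n^{d_3}$ for a suitable computable $g_3$ and a small constant $d_3$. I do not anticipate any real obstacle here: the lemma is essentially bookkeeping, with the one conceptual ingredient being that the colourings $f$ and $\omega = \sigma \circ \pi^{-1}$ pin down the would-be isomorphism uniquely, reducing the existential check to a deterministic scan over $E(H)$.
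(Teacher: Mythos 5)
Your proposal is correct and follows essentially the same route as the paper: check the colouring condition $f(v_i)=\sigma(i)$, then check $(H,\pi)\subseteq G[v_1,\ldots,v_k]$ edge by edge, noting that the colourings force the would-be isomorphism so no search is needed. The only difference is bookkeeping (you charge $O(n)$ per adjacency query where the paper charges time depending only on $k$), which does not affect the conclusion.
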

\begin{proof}
For any $\mathbf{v}=(v_1,\ldots,v_k) \in U_{n,k}$, in order to determine whether $\mathbf{v} \in A_i$ for any given $A_i = A_{f,\sigma,(H,\pi)}$, it suffices to check whether both the following conditions are satisfied:
\begin{enumerate}
\item for each $1 \leq i \leq k$, $f(v_i) = \sigma(i)$, and
\item $G[v_1,\ldots,v_k] \supseteq (H,\pi)$.
\end{enumerate}
The first of these two conditions can clearly be verified in time depending only on $k$.  For the second condition we need to check, for every edge $e=uv \in E(H)$, whether $v_{\pi^{-1}(u)}x_{\pi^{-1}(v)} \in E(G)$; this can also be done in time depending only on $k$.  The result follows immediately.
\end{proof}
 
With these three lemmas, we can prove Theorem \ref{approx-count}.

\begin{proof}[Proof of Theorem \ref{approx-count}]
We wish to approximate the cardinality of a set $A$ which, by \eqref{A-union}, can be written as a union of sets $A_1,\ldots,A_m$ (where $m_{n,k}$ is $l(k)n^{O(1)}$ for some function $l$).  It follows from Lemmas \ref{condition1}, \ref{condition2} and \ref{condition3} that, if we set $g(k) = \max \{g_1(k),g_2(k),g_3(k)\}$ and $d = \max \{d_1,d_2,d_3\}$, these sets satisfy the three conditions of Theorem \ref{K-L}; it therefore follows immediately that there exists an FPTRAS for estimating the size of $A$, in other words there exists an FPTRAS for \paramcount{\genprob}($\Phi$).
\end{proof}

\section{Application to \paramcount{Graph Motif}}
\label{motif}

The \textsc{Graph Motif} problem was first introduced by Lacroix, Fernandes and Sagot \cite{lacroix06} in the context of metabolic network analysis, and is defined as follows.
\\

\hangindent=1cm
\textsc{Graph Motif} \\
\textit{Input:} A vertex-coloured graph $G$ and a multiset of colours $M$. \\
\textit{Question:} Does $G$ have a connected subset of vertices whose multiset of colours equals $M$? \\

This decision problem, and a number of variations, have since been studied extensively (\cite{betzler08, dondi07, dondi09, fellows-motif11, guillemot13}).  The problem is known to be NP-complete in general \cite{lacroix06}, and remains NP-complete even if the input is restricted so that $G$ is a tree of maximum degree three and $M$ is a set rather than a multiset \cite{fellows-motif11}.  However, the decision problem is fixed parameter tractable when parameterised by the motif size $|M|$ \cite{fellows-motif11}.

It is natural to consider counting versions of the \textsc{Graph Motif} problem, and counting the number of occurrences of a given motif in a graph has applications in determining whether a motif is over- or under-represented in a biological network with respect to the null hypothesis \cite{lacroix09}.  In \cite{guillemot13}, Guillemot and Sikora consider the following parameterised counting version of the problem.
\\

\hangindent=1cm
\paramcount{XMGM} \\
\textit{Input:} A graph $G=(V,E)$, a colouring $c$ of $V$, and a multiset of colours $M$. \\
\textit{Parameter:} $k = |M|$. \\
\textit{Question:} How many $k$-vertex trees in $G$ have a multiset of colours equal to $M$?\\

The authors prove that this problem is \#W[1]-hard in the case that $M$ is a multiset, but is fixed parameter tractable when $M$ is in fact a set (\#XCGM).

In \paramcount{XMGM}, the output is the number of connected induced subgraphs of $G$ having colour-set exactly equal to $M$, where each such subgraph is weighted by its number of spanning trees.  In this section we consider a more direct translation of \textsc{Graph Motif} into the counting world, in which the goal is to compute simply the total number of connected induced subgraphs having the desired colour-set.
\\

\hangindent=1cm
\paramcount{Graph Motif} \\
\textit{Input:} A graph $G = (V,E)$, a colouring $c$ of $V$, and a multiset of colours $M$.\\
\textit{Parameter:} $k = |M|$. \\
\textit{Question:} How many subsets $U \subset V^{(k)}$ are such that $G[U]$ is connected and the multiset of colours assigned to $U$ is exactly $M$?\\

We adapt results from Sections \ref{exact} and \ref{approximate} to show that
\begin{itemize}
\item \paramcount{Graph Motif} is \#W[1]-hard, even in the case that $M$ is a set, and
\item there exists an FPTRAS for \paramcount{Graph Motif}.
\end{itemize}

Our hardness result is obtained by means of a trivial reduction from \paramcount{Multicolour Connected Induced Subgraph}, shown to be \#W[1]-complete in Section \ref{exact-hard}.

\begin{thm}
\paramcount{Graph Motif} is \#W[1]-hard, even when $M$ is a set.
\end{thm}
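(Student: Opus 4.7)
The plan is to give a trivial fpt parsimonious reduction from \paramcount{Multicolour Connected Induced Subgraph} to \paramcount{Graph Motif}; since the former problem was shown to be \#W[1]-hard en route to the proof of Theorem \ref{exact-hard} (via the chain \paramcount{Clique} \leqfptT \paramcount{Multicolour Independent Set} \leqfptT \paramcount{Multicolour Connected Induced Subgraph}), the hardness claim will follow.  Crucially, the reduction will produce an instance in which the motif $M$ is a set, which gives us the stronger form of the statement.

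Given an instance $(G,k)$ of \paramcount{Multicolour Connected Induced Subgraph}, where $G$ is equipped with a colouring $c:V(G)\to [k]$, the reduction simply outputs the triple $(G,c,M)$ with $M = [k]$ regarded as a \emph{set} (so each colour has multiplicity one), and with parameter $|M|=k$.  This mapping is clearly computable in polynomial time and preserves the parameter exactly.

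The key observation is that a subset $U \in V^{(k)}$ has colour-multiset exactly equal to $M = \{1,\ldots,k\}$ under $c$ if and only if $U$ is colourful in the sense of Section \ref{reduction}, i.e.\ contains exactly one vertex of each colour.  Combined with the shared requirement that $G[U]$ be connected, the solution sets counted by the two problems on the paired instances coincide, so the reduction is parsimonious.  There is no real obstacle here: all the technical content was established in Section \ref{reduction}, and the present statement is obtained essentially by re-reading the definitions.  Thus \paramcount{Graph Motif} is \#W[1]-hard even when $M$ is restricted to be a set.
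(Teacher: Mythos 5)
Your reduction is exactly the one the paper intends: it states that the hardness result follows "by means of a trivial reduction from \paramcount{Multicolour Connected Induced Subgraph}", and your observation that taking $M=[k]$ as a set makes the colour-multiset condition coincide with colourfulness is precisely that reduction, correctly noting that the counts agree so the reduction is parsimonious. No gaps; this matches the paper's approach.
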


Now we show that it is possible to approximate \paramcount{Graph Motif}, for any input $(G,M)$.  In fact, we prove that there exists an FPTRAS for the following generalisation of \paramcount{\genprob}$(\Phi)$ when $\Phi$ satisfies the conditions of Theorem \ref{approx-count}.
\\

\hangindent=1cm
\paramcount{Induced Coloured Subgraph With Property}$(\Phi)$ (\paramcount{ICSWP}$(\Phi)$) \\
\textit{Input:} A graph $G = (V,E)$, a colouring $c$ of $V$, and a multiset of colours $M$.\\
\textit{Parameter:} $k = |M|$. \\
\textit{Question:} What is the cardinality of the set $\{(v_1,\ldots,v_k) \in V^{\underline{k}}:$ \\
$\phi_k(G[v_1,\ldots,v_k]) = 1$ and $\{c(v_1),\ldots,c(v_k)\} = M\}$?\\

\begin{thm}
Let $\Phi = (\phi_1,\phi_2,\ldots)$ be a monotone property, and suppose there exists a positive integer $t$ such that, for each $\phi_k$, all edge-minimal labelled $k$-vertex graphs $(H,\pi)$ such that $\phi_k(H) = 1$ satisfy $\tw(H) \leq t$.  Then there exists an FPTRAS for \paramcount{ICSWP}$(\Phi)$.
\label{motif-general-approx}
\end{thm}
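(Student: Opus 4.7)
The plan is to adapt the proof of Theorem \ref{approx-count} by folding the colour constraint into the set decomposition used there. Let
$$A = \{(v_1, \ldots, v_k) \in V^{\underline{k}} : \phi_k(G[v_1, \ldots, v_k]) = 1 \text{ and } \{c(v_1), \ldots, c(v_k)\} = M\}$$
be the set we wish to approximate. Mirroring Section \ref{FPTRAS}, I would write
$$A = \bigcup_{f, \sigma, (H, \pi), c^*} A_{f, \sigma, (H, \pi), c^*},$$
where $f$ ranges over a $k$-perfect hash family $\mathcal{F}$ from $V$ to $[k]$, $\sigma$ ranges over $S_k$, $(H, \pi)$ ranges over the set $\mathcal{H}_k$ of edge-minimal labelled $k$-vertex graphs satisfying $\phi_k$, and the only new ingredient, $c^*$, ranges over the set $\mathcal{M}$ of functions $[k] \to M$ realising the multiset $M$. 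Membership of $(v_1, \ldots, v_k)$ in $A_{f, \sigma, (H, \pi), c^*}$ asks for the three conditions used previously, together with the extra colour constraint $c(v_i) = c^*(\sigma(i))$ for each $i$.

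For any tuple in $A$ together with chosen $f$ and $\sigma$ satisfying $f(v_i) = \sigma(i)$ for all $i$, the function $c^*$ defined by $c^*(\sigma(i)) = c(v_i)$ lies in $\mathcal{M}$ (since the colour multiset equals $M$) and is uniquely determined; hence the union over $c^*$ is disjoint and exhaustive. Since $|\mathcal{M}| \leq k!$, the total size of the indexing family is still of the form $l(k) n^{O(1)}$ required by Theorem \ref{K-L}.

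To verify conditions (1)--(3) of Theorem \ref{K-L}, I would identify each $A_{f, \sigma, (H, \pi), c^*}$ with the analogous set from the proof of Theorem \ref{approx-count}, but computed in the induced subgraph $G' = G[V']$ on
$$V' = \{v \in V : c(v) = c^*(f(v))\},$$
equipped with the colouring obtained by restricting $f$ to $V'$. Under this reduction, Lemmas \ref{condition1}--\ref{condition3} carry over verbatim with $G'$ replacing $G$: the counting and sampling results of Arvind and Raman (Lemmas \ref{arvind1} and \ref{arvind2}) depend only on the coloured graph at hand and the labelled target $(H, \pi)$, and membership testing needs only the additional trivial check that $c(v_i) = c^*(\sigma(i))$ for each $i$.

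No substantially new obstacle arises; the colour constraint is absorbed into the vertex-selection step by passing to $G'$, and the extra factor of $k!$ in the indexing family is harmless. The main care point is verifying that the decomposition over $c^*$ is both disjoint and exhaustive over tuples with the correct colour multiset, which follows from the uniqueness observation above together with the fact that every $c^* \in \mathcal{M}$ realises exactly the multiset $M$. An application of Theorem \ref{K-L} then produces the desired FPTRAS.
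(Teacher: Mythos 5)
Your proposal is correct and follows essentially the same route as the paper: the paper also decomposes the target set over $f$, $\sigma$, $(H,\pi)$ and a bijection $d:[k]\rightarrow M$ realising the multiset (your $c^*$), and verifies the Karp--Luby conditions by identifying each piece with the corresponding set from Theorem \ref{approx-count} computed in the induced subgraph $G_{f,d} = G[\{v : c(v) = (d\circ f)(v)\}]$, which is exactly your $G'$. The only differences are cosmetic (notation, and your sharper bound $k!$ versus the paper's $k^k$ on the number of colour assignments, both of which are harmless).
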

\begin{proof}
Once again, we use Theorem \ref{K-L} to demonstrate the existence of an FPTRAS for this problem.   As before, we will take $U_{n,k}$ to be the set of $k$-element subsets of $n$.  To make use of Theorem \ref{K-L},  we need to express 
\begin{align*}
B = \{(v_1,\ldots,v_k) \in V^{\underline{k}}: \quad & \phi_k(G[v_1,\ldots,v_k]) = 1 \text{ and} \\
									  & \{c(v_1),\ldots,c(v_k)\} = M \}
\end{align*}
as the union of some collection of sets $\mathcal{B}_{n,k}$ that satisfy the conditions of the theorem.  Applying the same reasoning as in the proof of Theorem \ref{approx-count} (and using the same notation), we see that 
$$B = \bigcup \{B_{f,\sigma,(H,\pi)}: f \in \mathcal{F}, \sigma \in S_k, (H,\pi) \in \mathcal{H}_k \},$$
where
\begin{align*}
B_{f,\sigma,(H,\pi)} = \{(v_1,\ldots,v_k) \in V^{\underline{k}}: \quad & (H,\pi) \subseteq G[v_1,\ldots,v_k],\{c(v_1),\ldots,c(v_k)\} = M,\\
														 &  \text{ and, for each } 1 \leq i \leq k, f(v_i) = \sigma(i)\}.
\end{align*}
Now, if we set $\mathcal{D}$ to be the set of all bijective mappings $d:[k] \rightarrow M$, we can write 
$$B_{f,\sigma,(H,\pi)} = \bigcup_{d \in \mathcal{D}} B_{f,\sigma,(H,\pi),d},$$
where
\begin{align*}
B_{f,\sigma,(H,\pi),d} = \{ (v_1,\ldots,v_k) \in V^{\underline{k}} : \quad & (H,\pi) \subseteq G[v_1,\ldots,v_k] \text{ and, for each } 1 \leq i \leq k, \\
				& f(v_i) = \sigma(i) \text{ and } c(v_i) = (d \circ f)(v_i) \}.
\end{align*}
So if we set 
$$\mathcal{B}_{n,k} = \{B_{f,\sigma,(H,\pi),d} : f \in \mathcal{F}, \sigma \in S_k, (H,\pi) \in \mathcal{H}_{\phi_k}, d \in \mathcal{D}\},$$
we have
$$B = \bigcup \mathcal{B}_{n,k}.$$
Note that $|\mathcal{B}_{n,k}| \leq k^k \cdot |\mathcal{A}_{n,k}|$ (with $\mathcal{A}_{n,k}$ as in the proof of Theorem \ref{approx-count}), and so we clearly have that $m_{n,k} = \tilde{l}(k)n^{O(1)}$ for some function $\tilde{l}$.  It remains to check that the sets $B_j \in \mathcal{B}_{n,k}$ satisfy the conditions of Theorem \ref{K-L}.

To see that the first two conditions hold, we make use of Lemmas \ref{condition1} and \ref{condition2}.  Observe that the set $B_{f,\sigma,(H,\pi),d}$, calculated with respect to the graph $G$ and its colouring $c$, is precisely equal to the set $A_{f,\sigma,(H,\pi)}$ (as defined in the proof of Theorem \ref{approx-count}) if instead of considering the graph $G$ we consider the graph 
$$G_{f,d} = G[\{v \in V: c(v) = (d \circ f)(v) \}].$$
Note that, given $f$ and $d$, the graph $G_{f,d}$ can be computed from $G$ in time $O(n^2)$, so it follows from Lemmas \ref{condition1} and \ref{condition2} that
\begin{itemize}
\item for each $\mathcal{B}_{n,k}$ and every $B_j \in \mathcal{B}_{n,k}$, there is an algorithm that computes $|B_j|$ in time $g_1(k)n^{d_1 + 1}$, and
\item for each $\mathcal{B}_{n,k}$ and every $B_j \in \mathcal{B}_{n,k}$, there is an algorithm that samples uniformly at random from $B_j$ in time $g_2(k)n^{d_2 + 1}$.
\end{itemize}

Thus it remains only to check that the third condition holds.  For any $k$-tuple $(v_1,\ldots,v_k) \in V^{\underline{k}}$ and any set $B_{f,\sigma,(H,\pi),d} \in \mathcal{B}_{n,k}$, we know by Lemma \ref{condition3} that we can check in time $g_3(k)n^{d_3}$ whether $(v_1,\ldots,v_k)$ belongs to the related set $B_{f,\sigma,(H,\pi)}$.  But $(v_1,\ldots,v_k) \in B_{f,\sigma,(H,\pi),d}$ if and only if we have both that $(v_1,\ldots,v_k) \in B_{f,\sigma,(H,\pi)}$ and that, for each $1 \leq i \leq k$, $c(v_i) = (d \circ f)(v_i)$; the time required to check this second condition clearly depends only on $k$.  Hence there exists a function $g_4: \mathbb{N} \rightarrow \mathbb{N}$ so that, for each $\mathcal{B}_{n,k}$ and every $B_j \in \mathcal{B}_{n,k}$, there is an algorithm that takes $\mathbf{v} \in U_{n,k}$ as input and determines whether $\mathbf{v} \in B_j$ in time $g_4(k)n^{d_3}$.

Hence, setting $g = \max \{g_1(k),g_2(k),g_4(k)\}$ and $d = \max \{d_1,d_2,d_3\}$, all the conditions of Theorem \ref{K-L} are satisfied, and therefore there exists an FPTRAS for \paramcount{ICSWP}$(\Phi)$.
\end{proof}

This result easily implies the existence of an FPTRAS for \paramcount{Graph Motif}.

\begin{cor}
There exists an FPTRAS for \paramcount{Graph Motif}.
\end{cor}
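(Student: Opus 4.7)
The plan is to recognise \paramcount{Graph Motif} as essentially an instance of \paramcount{ICSWP}$(\Phi^{\conn})$ and invoke Theorem \ref{motif-general-approx}. First I would verify that the connectedness family $\Phi^{\conn}$ defined in \eqref{phi_conn} satisfies the hypotheses of Theorem \ref{motif-general-approx}. Monotonicity is immediate from the disjunction in \eqref{phi_conn}: if $\phi_k^{\conn}(H,\pi) = 1$ then some spanning tree $T \in \mathcal{T}_k$ witnesses connectedness, and the edges of $T$ survive in any labelled supergraph $(H',\pi') \supseteq (H,\pi)$, so $\phi_k^{\conn}(H',\pi') = 1$. As noted at the end of Section \ref{probdefs}, the edge-minimal elements of $\{(H,\pi) \in \mathcal{L}(k) : \phi_k^{\conn}(H,\pi) = 1\}$ are precisely the labelled trees on $k$ vertices, and each such tree has treewidth $1$. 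So one may take $t = 1$ in Theorem \ref{motif-general-approx}.

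Next I would relate the two counts on a common instance $(G, c, M)$ with $k = |M|$. Feeding this instance into \paramcount{ICSWP}$(\Phi^{\conn})$ yields the number of tuples $(v_1,\ldots,v_k) \in V^{\underline{k}}$ such that $G[\{v_1,\ldots,v_k\}]$ is connected and $\{c(v_1),\ldots,c(v_k)\} = M$ as multisets. Both conditions depend only on the underlying unordered set $U = \{v_1,\ldots,v_k\}$, so every $k$-subset $U$ counted by \paramcount{Graph Motif} contributes exactly $k!$ distinct tuples to the \paramcount{ICSWP}$(\Phi^{\conn})$ count. Hence the \paramcount{Graph Motif} output equals $1/k!$ times the \paramcount{ICSWP}$(\Phi^{\conn})$ output on the same instance.

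Finally, by Theorem \ref{motif-general-approx} there is an FPTRAS for \paramcount{ICSWP}$(\Phi^{\conn})$. Dividing its output by the deterministic constant $k!$ preserves the multiplicative $(1\pm\epsilon)$ guarantee and the failure probability $\delta$, while adding only overhead depending on $k$ to the running time. This yields the required FPTRAS for \paramcount{Graph Motif}. There is no substantive obstacle: the corollary is a direct specialisation of Theorem \ref{motif-general-approx} to the connectedness property, combined with the trivial rescaling between ordered tuples in $V^{\underline{k}}$ and unordered $k$-subsets in $V^{(k)}$.
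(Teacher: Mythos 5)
Your proposal is correct and follows exactly the route the paper takes: specialise Theorem \ref{motif-general-approx} to $\Phi^{\conn}$, observe that the edge-minimal witnesses are labelled trees (treewidth $1$), and rescale the labelled count by $k!$ to recover the unordered count. You merely spell out the monotonicity and treewidth checks more explicitly than the paper, which treats them as clear.
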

\begin{proof}
Setting $\Phi = \Phi^{\conn}$ (as in \eqref{phi_conn}), it is clear both that $\Phi$ satisfies the conditions of Theorem \ref{motif-general-approx} and also that the output of \paramcount{ICSWP}$(\Phi)$ will be exactly $k!$ times the number of connected induced $k$-vertex subgraphs whose vertices have multiset of colours equal to $M$ (with the overcounting due to the number of distinct possible labellings of a $k$-vertex subgraph).  Thus we know from Theorem \ref{motif-general-approx} that there exists an FPTRAS for \paramcount{ICSWP}$(\Phi)$ in this situation, and so obtain an FPTRAS for \paramcount{Graph Motif} simply by dividing the output of the first algorithm by $k!$.
\end{proof}

\section{Conclusions and Open Problems}

We have shown that the problem \paramcount{Connected Induced Subgraph} is \#W[1]-hard, but that on the other hand there exists an FPTRAS for a more general problem \paramcount{\genprob}($\Phi$), where $\Phi$ is a monotone property such that the edge-minimal graphs satisfying $\Phi$ all have bounded treewidth.  We then adapted these results to show that a natural counting version of the problem \textsc{Graph Motif} is \#W[1]-hard, but has an FPTRAS.

We finish with two natural related open questions.
\begin{enumerate}
\item Are all (non-trivial) special cases of the class of problems covered by Theorem \ref{approx-count} \#W[1]-hard?
\item Does there exist any problem \paramcount{\genprob}($\Phi$), where $\Phi$ is a monotone property but the edge-minimal graphs that satisfy $\Phi$ do not all have bounded treewidth, such that \paramcount{\genprob}$(\Phi)$ admits an FPTRAS?
\end{enumerate}

\bibliography{../param_counting_refs} 

\end{document}